\newcommand{\re}{\text{\rm Re\,}}
\newcommand{\im}{\text{\rm Im\,}}
\newcommand{\ind}{\text{\rm ind\,}}
\newcommand{\kker}{{\textrm{Ker\,}}}
\newcommand{\eeq}{{\, \stackrel\centerdot=\, }}
\newcommand{\lee}{{\, \stackrel\centerdot\le\, }}
\newcommand{\gee}{{\, \stackrel\centerdot\ge\, }}
\newcommand{\gse}{{\, \stackrel\centerdot>\, }}
\newcommand{\lse}{{\, \stackrel\centerdot<\, }}
\newcommand{\opp}[1]{{({#1})_\oplus}\, }
\newcommand{\ie}{{\emph{i.e.},\ }}
\newcommand{\bi}{{\Big |}}
\newcommand{\dpp}{{^\prime{}^\prime}}
\newcommand{\bn}{{\mathbb{N}}}
\newcommand{\br}{{\mathbb{R}}}
\newcommand{\bc}{{\mathbb{C}}}
\newcommand{\ca}{{\mathcal{A}}}
\newcommand{\cdd}{{\mathcal{D}}}
\newcommand{\cb}{{\mathcal{B}}}
\newcommand{\cl}{{\mathscr{L}}}
\newcommand{\cf}{{\mathscr{F}}}
\newcommand{\ce}{{\mathscr{E}}}
\newcommand{\cg}{{\mathscr{G}}}
\newcommand{\cn}{{\mathcal{N}}}
\newcommand{\cm}{{\mathcal{M}}}
\newcommand{\bff}{{\mathcal{B}_{\mathscr{F}}}}
\newcommand{\ld}{{\ell^2_d}}
\newcommand{\ldo}{{\ell^2_{d, o}}}
\newcommand{\lde}{{\ell^2_{d, e}}}
\renewcommand{\a}{\alpha}
\renewcommand{\b}{\beta}
\renewcommand{\l}{\lambda}
\newcommand{\s}{\sigma}
\renewcommand{\d}{\delta}
\newcommand{\ep}{\varepsilon}
\newcommand{\epp}{\epsilon}
\newcommand{\pp}{\Phi}
\newcommand{\pps}{\Psi}
\newcommand{\nt}{\noindent}
\newcommand{\ovl}{\overline}
\newcommand{\ti}{\tilde}
\newcommand{\lb}{\left[}
\newcommand{\rb}{\right]}
\newcommand{\lp}{\left(}
\newcommand{\rp}{\right)}
\newcommand{\lt}{\left}
\newcommand{\rt}{\right}
\newcommand{\la}{\langle}
\newcommand{\ra}{\rangle}
\newcommand{\nn}{\nonumber}
\newcommand{\tr}{{\textrm{tr \,}}}
\newcommand{\rre}{{\rm{Re\, }}}
\numberwithin{equation}{section}
\newtheorem{theorem}{Theorem}[section]
\newtheorem*{theorema}{Theorem A}
\newtheorem*{theoremb}{Theorem B}
\newtheorem{lemma}[theorem]{Lemma}
\newtheorem{proposition}[theorem]{Proposition}
\begin{document}

\title[Instability of the essential spectrum for Jacobi matrices]
{On the instability of the essential\\ spectrum for block Jacobi matrices}

\author{S. Kupin}
\address{IMB, CNRS, Universit\'e de Bordeaux, 351 ave. de la Lib\'eration, 33405 Talence Cedex, France}
\email{skupin@math.u-bordeaux.fr}

\author{S. Naboko}
\address{Physics Institute, St. Petersburg State University, Ulyanovskaya str. 1, St. Peterhof, St. Petersburg, 198-904 Russia}
\email{sergey.naboko@gmail.com}

\subjclass[2010]{Primary: 47B36; Secondary: 47A10}

\begin{abstract}
We are interested in the phenomenon of the essential spectrum instability for a class of unbounded (block) Jacobi matrices. We give a series of sufficient conditions for the matrices from certain classes to have a discrete spectrum on a half-axis of a real line. An extensive list of examples showing the sharpness of obtained results is provided.
\end{abstract}

\maketitle

\begin{center}
{\it To Leonid Golinskii on the occasion of his 65-th anniversary}
\end{center}

\section*{Introduction}\label{s0}

Being given a self-adjoint operator on a Hilbert space, the spectral structure of its “relatively small” (\ie relatively compact, relatively trace class, etc.) perturbations is nowdays well understood, see Kato \cite{ka}, Reed-Simon \cite{resi1}, and Birman-Solomyak \cite{birso1}.
For instance, accordingly to Weyl theorem,  any compact perturbation of a bounded self-adjoint Jacobi matrix preserves its essential spectrum, see Kato \cite[Thm. 5.35]{ka}. When the perturbation is of trace class, the absolutely continuous spectrum of the Jacobi matrix is also preserved, see Kato \cite[Thm. 4.4]{ka}.

The main subject of this article are unbounded (block) Jacobi matrices. The peculiarity of this class of operators is that one is often interested in spectral properties of a perturbation of a model Jacobi matrix, and the perturbation is not “relatively small” in the sense of the previous paragraph. Still, it is frequently “reasonably small” from the point of view of various applications. To illustrate this, consider Jacobi matrices
$$
J_\a=J(\{n^\a\},\{0\}),\quad J_{\a,\b}=J(\{n^\a\},\{n^\b\}),
$$
where $0<\a\le 1,\ 0<\b<\a$, see \eqref{e101-} for definitions. Despite the fact that the diagonal elements of $J_{\a,\b}$ are small, or even negligible, as compared to off-diagonal elements on the power scale, it is easy to see that $J_{\a,\b}$ is neither relatively compact nor relatively bounded perturbation of $J_\a$, see Subsection \ref{s301} for a sketch of a proof.

Hence, the study of the spectral problems of the above type leads to results on different types of \emph{spectral instability} or \emph{spectral phase transition}, see Damanik-Naboko \cite{dana}, Janas-Naboko \cite{jana1}-\cite{jana4}, Janas-Naboko-Stolz \cite{jana6}, Naboko-Pchelintseva-Silva \cite{napche}, Naboko-Simonov \cite{nasi}. 

In the present article, we study the problem of the essential spectrum in\-sta\-bi\-lity with respect to perturbations of  unbounded block Jacobi matrices, including the perturbations with slowly growing entries as compared to the entries of the unperturbed Jacobi matrix. Leaving aside the above cases of “relatively small”  perturbations of a Jacobi matrix,  we prove that “straightforward'' counterparts of  classical results  are not valid anymore in this situation.

We introduce some notation to formulate our theorems. A block Jacobi matrix (or, a Jacobi matrix with matrix entries) $J: \ell^2(\bn; \bc^d)\to \ell^2(\bn; \bc^d), \ d\ge 1$, is defined as 
\begin{equation}\label{e1}
J=J(\{A_n\}, \{B_n\})=
\begin{bmatrix}
B_1& A_1& 0& \ldots\\
A^*_1& B_2& A_2& \ddots\\
0& A^*_2&B_3& \ddots\\
\vdots& \ddots&\ddots&\ddots
\end{bmatrix},
\end{equation}
where $A_j, B_j\in \mathcal{M}_{d,d}(\bc)$ are $d\times d$ complex matrices. We assume that $B_j=B^*_j$ and $A_j$ is invertible for all $j$.  The zero and the identity operators on $\bc^d$ are denoted by $0_d$ and $I_d$, respectively. When $d=1$, we say that $J$ is a scalar-valued Jacobi matrix, or just a Jacobi matrix to be short, \emph{i.e.},
\begin{equation}\label{e101-}
J=J(\{a_n\}, \{b_n\})=
\begin{bmatrix}
b_1& a_1& 0& \ldots\\
a_1& b_2& a_2& \ddots\\
0& a_2&b_3& \ddots\\
\vdots& \ddots&\ddots&\ddots
\end{bmatrix},
\end{equation}
where $a_j,b_j\in\br$ and $a_j\not=0$ for all $j$. 

Let $\{B_n\}\subset \cm_{d,d}(\bc)$ be a sequence of Hermitian matrices. We say that
\begin{equation}\label{e102-}
\lim_{n\to\infty} B_n=+\infty,
\end{equation}
if  for any $M\ge 0$ there is a $N=N(M)$ such that $B_n\ge M I_d$ for $n\ge N$.
\begin{theorema}[{= a short version of Theorem \ref{t1}}]\label{t001} Let $J$ be a block Jacobi matrix \eqref{e1}. Suppose that:
\begin{enumerate} 
\item for some fixed $c\in\br$, there is a number $N_0$ such that $B_{2n}\le c I_d$ for $n\ge N_0$.  
\item we have
$$
\lim_{n\to\infty} B_{2n-1}=+\infty. 
$$
\end{enumerate}
Then the spectrum of $J$ (denoting an arbitrary self-adjoint extension of $J_{min}$, see \eqref{e1016}) is discrete in $(c,+\infty)$, and it accumulates to $+\infty$ only.
\end{theorema}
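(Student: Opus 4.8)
The plan is to show that $J - \mu$ is bounded below with compact resolvent on the half-axis $(c,+\infty)$ by separating the odd- and even-indexed coordinates and exploiting that on the odd coordinates the diagonal blows up while the coupling to the even coordinates is controlled by a single off-diagonal band. Concretely, write $\ell^2(\bn;\bc^d) = \Hc_o \oplus \Hc_e$, where $\Hc_o$ (resp. $\Hc_e$) is spanned by the basis vectors with odd (resp. even) index. Relative to this decomposition $J$ has the $2\times 2$ block form $\begin{bmatrix} D_o & T \\ T^* & D_e\end{bmatrix}$, where $D_o = \mathrm{diag}(B_1,B_3,B_5,\dots)$, $D_e = \mathrm{diag}(B_2,B_4,\dots)$, and $T$ is the operator carrying the $A_j$'s. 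The point of the hypotheses is that $D_o \to +\infty$ (hypothesis (2)) while $D_e \le cI$ eventually (hypothesis (1)).

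First I would fix $\mu > c$ and estimate the quadratic form $\langle (J-\mu)u, u\rangle$ for $u = u_o \oplus u_e$ in the form domain. Hypothesis (1) gives $\langle (D_e - \mu) u_e, u_e\rangle \le (c-\mu)\|u_e\|^2 + C_0\|u_e\|^2$ where $C_0$ accounts for the finitely many even blocks with $n < N_0$; so on the even part the form is bounded above by a negative multiple of $\|u_e\|^2$ up to a finite-rank correction. The cross term $2\,\re\langle T u_e, u_o\rangle$ is where the off-diagonal growth of the $A_j$'s enters: here I would use the Schur-test / Cauchy–Schwarz trick of splitting $2\re\langle Tu_e,u_o\rangle \le \langle S u_o, u_o\rangle + \langle S^{-1} T^* T u_e, u_e\rangle$ for a suitable positive diagonal operator $S$ acting on $\Hc_o$, chosen so that $S \le \tfrac12 (D_o - \mu)$ on all but finitely many odd coordinates (possible precisely because $D_o - \mu \to +\infty$ dominates any fixed power-type weight that controls $T^*T$ against $S^{-1}$). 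This is the standard "diagonal dominance after splitting" estimate; the delicate bookkeeping is to verify that $S^{-1}T^*T$ restricted to $\Hc_e$ is bounded above by a negative-definite-plus-finite-rank operator, which forces one to match the rate at which $B_{2n-1}\to\infty$ against the size of $A_n, A_{n-1}$. I expect that earlier in the paper (the full Theorem~\ref{t1}) the hypotheses are arranged so that no growth condition on $A_n$ is actually needed — the key observation being that each $A_n$ couples $B_{2n-1}$ only to the two neighbours $B_{2n-2}, B_{2n}$, so one can absorb $|A_n|^2 / (B_{2n-1}-\mu)$ into the even diagonal, which is bounded; since $B_{2n-1}-\mu\to\infty$, these correction terms are eventually as small as we like, i.e. bounded and tending to $0$.

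Putting these together, I would obtain that for all $u$ in the form domain,
\begin{equation}\label{eq:lowerbound}
\langle (J-\mu)u,u\rangle \ge \tfrac12 \langle (D_o-\mu) u_o, u_o\rangle - (\mu - c')\|u_e\|^2 - \langle Fu,u\rangle,
\end{equation}
where $c' > c$ can be taken arbitrarily close to $c$ as the finitely many exceptional indices are pushed out, and $F\ge 0$ is a finite-rank operator. Since $D_o - \mu \ge 0$ eventually, inequality \eqref{eq:lowerbound} shows that $J - \mu$ is bounded below; moreover the form $\langle (D_o - \mu + I)u_o,u_o\rangle + \|u_e\|^2$ has compact resolvent on $\Hc_o \oplus \Hc_e$ because $D_o \to +\infty$ forces the sublevel sets of the $\Hc_o$-part to be compact (a diagonal operator with entries $\to +\infty$ has compact resolvent) — wait, the $\Hc_e$-part is not compactly embedded, so instead I argue directly: by the min-max principle combined with \eqref{eq:lowerbound}, for any $\nu < \mu$ with $\nu > c$ the spectral projection $E_J((-\infty,\nu])$, when compressed, is essentially supported on $\Hc_e$ plus a finite-dimensional piece of $\Hc_o$; one then runs the standard argument (decoupling $J$ into $D_e$ on $\Hc_e$, which has spectrum in $(-\infty,c']$, plus a part bounded below by $\nu$ on $\Hc_o$ up to finite rank, via a Neumann-series/perturbation-of-the-resolvent comparison of $J$ with $D_o\oplus D_e$) to conclude that $\sigma(J)\cap(\mu,+\infty)$ consists only of eigenvalues of finite multiplicity with no finite accumulation point.

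The main obstacle I anticipate is twofold: (a) handling the \emph{unboundedness} rigorously — one must work with the minimal operator $J_{min}$, fix a self-adjoint extension, and check that the form-domain manipulations above are legitimate for \emph{every} such extension (equivalently, that the deficiency indices / boundary behaviour do not introduce spectrum in $(c,+\infty)$); the self-adjoint extensions differ by a finite-rank-type boundary term, which should not affect discreteness on the half-axis but must be argued; and (b) making the cross-term estimate uniform, i.e. proving that the "error" operator $\sum_n |A_n|^2/(B_{2n-1}-\mu)$ acting block-diagonally on $\Hc_e$ is genuinely bounded (indeed eventually small) rather than merely finite on a dense set — this is exactly the place where the precise relation between $\{A_n\}$ and $\{B_{2n-1}\}$ in the full hypotheses of Theorem~\ref{t1} is used, and where I would need to invoke the detailed estimates established earlier rather than the short version quoted here.
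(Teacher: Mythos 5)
Your plan has a genuine gap, and it sits exactly where you flag it yourself: you assume that the full Theorem \ref{t1} contains ``detailed estimates'' relating $\{A_n\}$ to $\{B_{2n-1}\}$ which let you absorb the Schur-complement--type error $A_n^*(B_{2n-1}-\mu)^{-1}A_n$ into the (bounded) even diagonal. No such hypothesis exists: Theorem \ref{t1} imposes \emph{no condition whatsoever} on $\{A_n\}$, and this independence is precisely the point of the result (cf.\ the discussion after Theorem A and the example in Subsection \ref{s32}, where $a_n=n^\alpha$, $b_{2n-1}=n^\beta$ with $\beta<\alpha$, so that $a_n^2/b_{2n-1}\to+\infty$). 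For such data your correction operator on $\Hc_e$ is unbounded, the splitting $2\re\langle Tu_e,u_o\rangle\le\langle Su_o,u_o\rangle+\langle S^{-1}T^*Tu_e,u_e\rangle$ cannot be closed, and the first-order quadratic form estimate fails; an argument of that type, with a smallness condition like $\|(B_{2n}^{-1/2})_+A_{2n-1}^*B_{2n-1}^{-1/2}\|$ controlled, is essentially Theorem \ref{t22}, not Theorem A. A second, independent problem is the final step: even granting your inequality \eqref{eq:lowerbound}, a lower bound whose negative part lives on the infinite-dimensional subspace $\Hc_e$ does not exclude essential spectrum in $(c,\mu)$, and the concluding ``decoupling/Neumann series'' sketch is not an argument -- it is again blocked by the uncontrolled off-diagonal operator $T$.

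The paper avoids both difficulties by working with the \emph{square}: after reducing to $c=0$ it shows that the quadratic form of $(J-aI)^2-a^2I$ is nonnegative on vectors $u\in\ell^2_{d,\emptyset}$ vanishing on the first $\cn(2a)$ coordinates. When this form is expanded along $\ld=\ldo\oplus\lde$, the square produces the positive terms $\|Tu_o\|^2$ and $\|T^*u_e\|^2$, and these are exactly what Cauchy--Schwarz needs to absorb the two cross terms; all dependence on $\{A_n\}$ cancels identically, leaving only
\begin{equation*}
2a\,\langle P_o(B-2aI)P_o\,u_o,u_o\rangle-2a\,\langle P_eBP_e\,u_e,u_e\rangle,
\end{equation*}
which is $\ge 0$ on the tail because $B_{2n}\le 0$ eventually and $B_{2n-1}\ge 2aI_d$ eventually. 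The min-max principle on this finite-codimension subspace then gives $n(J;(0,2a))\le d\cdot\cn(2a)+\ind J_{min}$ for every $a>0$, hence discreteness of $\sigma(J)$ in $(c,+\infty)$ with accumulation only at $+\infty$; the choice of self-adjoint extension changes the counting function only by a finite amount, which settles your concern (a). If you want to repair your proof, replace the first-order form of $J-\mu$ by the form of $(J-aI)^2-a^2I$; the ``miraculous'' cancellation of the $A_n$-terms is the key idea you are missing.
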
 
The proof of the above theorem uses, besides all, an elementary spectral variational principle applied to an appropriate increasing family of subspaces,  see Weid\-mann \cite[Satz 8.28, 8.29]{we}. 

Several remarks are in order. First, the above theorem trivially holds true if one permutes the rôles of $\{B_{2n}\}$ and $\{B_{2n-1}\}$. Second, the theorem \emph{does not depend} on the behavior of the sequence $\{A_n\}$, and, third, it imposes rather mild conditions on the growth of the diagonal subsequence $\{B_{2n-1}\}$. In particular, there is no condition on the rate of growth of matrix entries $B_{2n-1}$ to $+\infty$. It follows that an arbitrary small growth of the entries of a diagonal perturbation $\{B_{2n-1}\}$ can change dramatically the essential spectrum of the matrix. For instance,  an unbounded self-adjoint Jacobi matrix $J(\{A_n\},\{0_d\})$ can have the essential spectrum filling in the whole real line. At the same time, its diagonal self-adjoint perturbation $J(\{A_n\}, \{B_n\})$, with blocks $B_{2n-1}$ growing arbitrarily slowly at infinity, can change the essential spectrum on a half-line to the purely discrete spectrum. In particular, we see that even the absolutely continuous spectrum is completely unstable  with respect to diagonal perturbations of the above type; see examples given in Section \ref{s3} for more details.

The next theorem addresses the same effect, but it allows one to have a certain interaction between diagonal and off-diagonal components of the block Jacobi matrix.
 For a $B\in\cm_{d,d}(\bc), B^*=B$, let
\begin{equation}\label{e1031}
\opp{B^{-1}}=
\lt\{
\begin{array}{lcl}
(B_+)^{-1}&, & \ \mathrm{on} \ \im (B_+),\\
0&,&  \ \mathrm{on} \ \im (B_+)^\perp,
\end{array}
\rt.
\end{equation}
where $B_+$ stands for the positive part of $B$, see the discussion before Theorem \ref{t21}.

\begin{theoremb}[{= Theorem \ref{t22}}]\label{t002} Let $J$ be the Jacobi matrix defined by \eqref{e1}. Assume that:
\begin{enumerate}
\item $\lim_{k\to\infty}B_{2k-1}=+\infty$,
\item $\lim_{k\to+\infty} \opp{B^{-1}_{2k}}=0$, 
\item moreover, one has 
\begin{equation*}
\limsup_{n\to+\infty} ||\opp{B^{-1/2}_{2n}} A^*_{2n-1} B^{-1/2}_{2n-1}||+
\limsup_{n\to+\infty}||\opp{B^{-1/2}_{2n-2}} A_{2n-2}B^{-1/2}_{2n-1}||<1.
\end{equation*}
\end{enumerate}
Then the part of the spectrum $\s(J)\cap(0, +\infty)$ is discrete.
\end{theoremb}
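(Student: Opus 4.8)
\medskip
\noindent\emph{Proof strategy.} The plan is to establish the stronger statement $\sigma_{\ess}(J)\cap(0,+\infty)=\emptyset$, which forces $\sigma(J)\cap(0,+\infty)$ to be discrete.

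First I would reduce to a tail operator. Deleting the off-diagonal block $A_N$ is a bounded finite-rank, hence compact, self-adjoint perturbation that decouples any self-adjoint extension of $J_{min}$ into the orthogonal sum of a finite Hermitian matrix on $\bc^{Nd}$ and a half-line block Jacobi operator $J_{(N)}$ on the sites $\{N+1,N+2,\dots\}$; since every self-adjoint extension acts on its domain by the formal three-term expression, this is legitimate and gives $\sigma_{\ess}(J)=\sigma_{\ess}(J_{(N)})$ for every $N$. As (1)--(3) are conditions at infinity, given any large $M$ and small $\eta>0$ and with $\theta<1$ the number furnished by (3), I can choose $N$ so that in $J_{(N)}$ one has, for all $k$, $B_{2k-1}\ge M I_d$, $\opp{B^{-1}_{2k}}\le\eta I_d$, and $\alpha_k+\beta_{k-1}\le\theta$, $\alpha_k+\beta_k\le\theta$, where $\alpha_k:=\|\opp{B^{-1/2}_{2k}}A^*_{2k-1}B^{-1/2}_{2k-1}\|$ and $\beta_k:=\|\opp{B^{-1/2}_{2k}}A_{2k}B^{-1/2}_{2k+1}\|$ are the two norms appearing in (3). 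Fix $\lambda>0$; enlarging $M$ and shrinking $\eta$ we may assume $M>\lambda$ and $\theta/\sqrt{(1-\lambda/M)(1-\lambda\eta)}<1$, and it remains to show $\lambda\notin\sigma(J_{(N)})$, since then $\lambda\notin\sigma_{\ess}(J_{(N)})=\sigma_{\ess}(J)$.

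Next I would set up the basic orthogonal decomposition. Split each even coordinate space $\bc^d=V^+_k\oplus V^-_k$ with $V^+_k:=\im\big((B_{2k})_+\big)$, and put $\Hc_o:=\bigoplus_k\bc^d$ (odd sites), $\Hc_e^{\pm}:=\bigoplus_k V^{\pm}_k$. Relative to $\Hc_o\oplus\Hc_e^+\oplus\Hc_e^-$ write $J_{(N)}-\lambda$ as a $3\times 3$ operator matrix with block-diagonal diagonal entries $D_o-\lambda$, $D_e^+-\lambda$, $D_e^--\lambda$ (having blocks $B_{2k-1}$, $(B_{2k})_+|_{V^+_k}$, $-(B_{2k})_-|_{V^-_k}$ respectively), off-diagonal coupling blocks $C_+\colon\Hc_e^+\to\Hc_o$ and $C_-\colon\Hc_e^-\to\Hc_o$ assembled from the $A_n$, and a vanishing $\Hc_e^+$--$\Hc_e^-$ block (even sites are not directly linked). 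The three diagonal blocks are boundedly invertible, with opposite signs: $D_o-\lambda\ge(M-\lambda)I$, $D_e^+-\lambda\ge(\eta^{-1}-\lambda)I$, and — the decisive point — $D_e^--\lambda\le-\lambda I<0$. Passing to the weighted variables $(B_{2k-1}-\lambda)^{1/2}x_{2k-1}$ and $\big((B_{2k})_+-\lambda\big)^{1/2}x^+_{2k}$ and running a Schur-type test on the two-diagonal structure of $C_+$, hypothesis (3) yields
\[
|\la C_+q,p\ra|\ \le\ \kappa\,\|(D_o-\lambda)^{1/2}p\|\,\|(D_e^+-\lambda)^{1/2}q\|,\qquad \kappa:=\frac{\theta}{\sqrt{(1-\lambda/M)(1-\lambda\eta)}}<1,
\]
whereas no bound on $C_-$ is available.

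Finally I would run an approximate-eigenvector argument. If $\lambda\in\sigma(J_{(N)})$, choose $x_n=p_n+q_n+r_n$ (components in $\Hc_o,\Hc_e^+,\Hc_e^-$) with $\|x_n\|=1$ and $(J_{(N)}-\lambda)x_n=f_n\to 0$, and write $f_n=f_n^o+f_n^++f_n^-$ componentwise, so that $f_n^o=(D_o-\lambda)p_n+C_+q_n+C_-r_n$, $f_n^+=C_+^*p_n+(D_e^+-\lambda)q_n$, $f_n^-=C_-^*p_n+(D_e^--\lambda)r_n$. Pairing $f_n$ with $q_n$ gives $\|(D_e^+-\lambda)^{1/2}q_n\|^2=-\re\la C_+q_n,p_n\ra+o(1)$. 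Pairing $f_n$ with $r_n$ and then substituting $C_-r_n=f_n^o-(D_o-\lambda)p_n-C_+q_n$ — the device that disposes of the uncontrolled coupling — gives $\la(D_e^--\lambda)r_n,r_n\ra=\|(D_o-\lambda)^{1/2}p_n\|^2+\re\la C_+q_n,p_n\ra+o(1)$, whose left-hand side is $\le0$, so $\|(D_o-\lambda)^{1/2}p_n\|^2\le|\re\la C_+q_n,p_n\ra|+o(1)$. Setting $a_n:=\|(D_o-\lambda)^{1/2}p_n\|$, $b_n:=\|(D_e^+-\lambda)^{1/2}q_n\|$ and inserting the $C_+$-estimate into these two relations yields $a_n^2\le\kappa a_nb_n+o(1)$ and $b_n^2\le\kappa a_nb_n+o(1)$; adding and using $2a_nb_n\le a_n^2+b_n^2$ gives $(1-\kappa)(a_n^2+b_n^2)\le o(1)$, hence $a_n,b_n\to0$ and therefore $\|p_n\|,\|q_n\|\to0$. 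Feeding this back, $\la(D_e^--\lambda)r_n,r_n\ra\to0$, and since $D_e^--\lambda\le-\lambda I$ this forces $\|r_n\|\to0$; thus $\|x_n\|\to0$, contradicting $\|x_n\|=1$. Hence $\lambda\notin\sigma(J_{(N)})$, and $\lambda>0$ being arbitrary, $\sigma(J)\cap(0,+\infty)$ is discrete.

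The crux — and the reason the hypotheses take this shape — is that the coupling $C_-$ of the odd sites to the ``bad'' even directions $V^-_k$, on which $B_{2k}$ is negative and possibly unbounded below, is not constrained at all by (2)--(3), so a plain quadratic-form bound on a finite-codimensional subspace is unavailable. The way around this is never to estimate $C_-r_n$ (or $C_-^*p_n$) in isolation but always to trade it, via the three-term recurrence obeyed by any self-adjoint extension, for the controlled quantities $f_n^o$, $(D_o-\lambda)p_n$, $C_+q_n$, and then to use that the shifted negative block $D_e^--\lambda$ is uniformly negative so that smallness of $\la(D_e^--\lambda)r_n,r_n\ra$ propagates to smallness of $\|r_n\|$. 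A second point needing attention is the legitimacy of the opening reduction for an arbitrary self-adjoint extension, which rests on the standard facts that such an extension acts by the formal Jacobi expression on its domain and that $\sigma_{\ess}$ is invariant under the bounded finite-rank decoupling perturbation.
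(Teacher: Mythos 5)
Your route is genuinely different from the paper's. The paper never produces Weyl sequences: it bounds the quadratic form of $(J-aI)^2-a^2I$ from below on finitely supported vectors, uses only the odd/even splitting $\ld=\ldo\oplus\lde$, verifies positivity \emph{modulo finite rank} of the resulting $2\times2$ block form via the generalized Schur--Frobenius lemma (Lemma \ref{l2}), and handles non-invertible $B_{2k}$ by restricting to $\ce=\oplus_k\im B_{2k}$; discreteness on $(0,2a)$ then follows from the variational principle, with $\ind J_{min}$ absorbing the deficiency of an arbitrary extension. Your three-fold splitting of the even sites into $\im((B_{2k})_+)$ and its complement, the weighted Schur test on $C_+$, and especially the elimination of the uncontrolled coupling $C_-$ through the odd-row equation are an attractive alternative, and the algebra (the two inequalities $a_n^2\le\kappa a_nb_n+o(1)$, $b_n^2\le\kappa a_nb_n+o(1)$, then using $D_e^--\lambda\le-\lambda I$) is correct as a formal computation.

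There is, however, a genuine gap in the Weyl-sequence step. You work with arbitrary vectors $x_n$ in the domain of the self-adjoint extension $J_{(N)}$ and split the pairings $\la f_n,q_n\ra$ and $\la f_n,r_n\ra$ into separate sums; this presupposes that $\|(D_o-\lambda)^{1/2}p_n\|$ and $\|(D_e^+-\lambda)^{1/2}q_n\|$ are finite and that the cross terms such as $\la C_+q_n,p_n\ra$ and $\la p_n,(D_o-\lambda)p_n\ra$ converge. For $x\in\cdd(J_{max})$ one only knows $x,Jx\in\ld$; since $B$ is \emph{not} relatively (form-)bounded with respect to $J$ --- this non-dominance is the very theme of the paper, cf.\ Subsection \ref{s301} --- these weighted sums can diverge for domain vectors, and then none of your three displayed identities is available. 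A symptom that something must be missing is that your intermediate claim ``$\lambda\notin\s(J_{(N)})$'' is too strong: the hypotheses (1)--(3) do not exclude the limit circle case, where the inherited tail extension may well have an eigenvalue at the prescribed $\lambda>0$, so a correct argument can only rule out the \emph{essential} spectrum. The repair is standard but must be said: aim only at $\lambda\notin\sigma_{\ess}(J_{(N)})$, take a weakly null singular sequence, use that $\cdd(J_{(N)})$ exceeds $\cdd\bigl(\,\overline{(J_{(N)})_{min}}\,\bigr)$ by a finite-dimensional subspace to correct the sequence into the minimal domain, and then approximate in graph norm by vectors of $\ell^2_{d,\emptyset}$, for which all your sums are finite and the computation is rigorous --- which is precisely why the paper runs its argument as a quadratic-form estimate on finitely supported vectors.
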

In other words, condition (2) of the theorem says that either 
$$
\lim_{k\to\infty}(B_{2k})_+\Big|_{\im((B_{2k})_+)}=+\infty,
$$ 
or $(B_{2k})_+=0$. Among other techniques, the proof of this theorem uses a generalized version of Schur-Frobenius lemma, see Subsection \ref{s14}. 

Theorems A and B are new even in the case of scalar-valued Jacobi matrices. Note also that an explicit special case of the phenomenon described in the above theorems, was studied in Damanik-Naboko \cite{dana}, see Example 1, Subsection \ref{s31}. The class of Jacobi matrices from \cite{dana} illustrated  the so-called spectral phase transition phenomenon of the second kind, see also Janas-Naboko \cite{jana1}-\cite{jana3}.  Hence, one can consider the present theorems  as generalizations of results from the above mentioned articles.

The paper is organized as follows. In Section \ref{s1}, we introduce the notation and give a brief list of facts on Jacobi and block Jacobi matrices we shall use in the paper. It contains some basic facts on self-adjoint Jacobi matrices, Gilbert-Pearson subordinacy theory in the Jacobi matrix case \cite{gipe, khape}, Levinson's type asymptotic results \cite{jamo1, jamo2}, and a special version of classical Schur-Frobenius lemma \cite[Sect. 1.6]{tre}. Theorem A along with its corollaries is proved in Section \ref{s01}. Theorem B is given in Section \ref{s2}. Section \ref{s3} presents a series of examples illustrating the sharpness of obtained results as well as counter-examples to certain attempts of their generalizations.

Concluding the introduction, we say a few more words on the notation. For a separable Hilbert space $H$, the zero and the identity operators are denoted by $0_H$ and $I_H$, respectively. As explained above, for $H=\bc^d$, we write $0_d$ and $I_d$, correspondingly. When the spaces we work on is clear from the context the subindex $(.)_H$ is dropped. The writing $\cm_{d,d}(\bc)$ stays for the algebra of $d\times d$ complex matrices.

\section{Preliminaries}\label{s1}

\subsection{Notation and generalities on (block) Jacobi matrices}\label{s11}
Recall the definitions of a block  and a scalar-valued Jacobi matrices given in \eqref{e1} and \eqref{e101-}, respectively. In general, we prove our results for block Jacobi matrices, specializing, if needed, to the scalar-valued case.

To keep the notation simple, the space $\ell^2(\bn; \bc^d)$ is denoted $\ld$; of course, we write $\ell^2$ for $\ell^2_1$. Sometimes, we have to indicate precisely the space $\bc^d$, corresponding to the $j$-th ``component'' of a vector $u=\{u_j\}\in\ld$. We shall write it as $(\bc^d)_j$ so that $u_j\in (\bc^d)_j$. We set $P_j:\ld\to (\bc^d)_j$ to be the corresponding orthoprojector, \emph{i.e.}, $P_ju=u_j\in (\bc^d)_j$.

We shall use also orthoprojectors on ``odd'' and ``even'' subspaces of $\ld$. That is, for a 
$u=\{u_j\}\in\ld$, we represent it as a sum of vectors $u_o, u_e$ defined as
\begin{eqnarray*}
u_o&=&P_o u=
\lt\{
\begin{array}{lcl}
0_d&,& j=2k,\\
u_j&,& j=2k-1,
\end{array}\rt.\\
&&\\
u_e&=&P_e u=
\lt\{
\begin{array}{lcl}
u_j&,& j=2k,\\
0_d&,& j=2k-1,
\end{array}\rt.
\end{eqnarray*}
where $k\in\bn=\{1,2,\dots\}$.
Furthermore,
\begin{eqnarray*}
\ldo&=&\{v=\{v_j\}_{j\in\bn}\in \ell^2_d: v_{2k}=0_d, \ k\in\bn\},\\
\lde&=&\{v=\{v_j\}_{j\in\bn}\in \ell^2_d: v_{2k-1}=0_d, \ k\in\bn\}.
\end{eqnarray*}

We put $P_e:\ell^2_d\to\lde$ and $P_o:\ell^2_d\to\ldo$ to be the orthoprojectors on the above subspaces.  For $u\in\ld$,  we have trivially $u=u_o\oplus u_e$ with $u_o\in\ldo$ and $u_e\in\lde$.

We  now remind some basic facts on self-adjoint extensions of Jacobi matrices \eqref{e1}. Self-adjoint extensions of scalar-valued Jacobi matrices are discussed in Teschl \cite[Sect. 2.6]{te1}. The case of block Jacobi matrices is in Berezanskii \cite[Sect. VII.2]{ber1}, see also Damanik-Pushnitskii-Simon \cite{dapu} in this connection. For the general theory of self-adjoint operators, see Reed-Simon \cite{resi1}, Birman-Solomyak \cite{birso1} and Weidmann \cite{we1}.

For a given $J$ \eqref{e1}, define its minimal and maximal domains
\begin{eqnarray}\label{e1016}
\cdd_{min}&=&\cdd(J_{min})=\ell^2_{d, \emptyset}=\{u=\{u_j\}\in\ld: \ u_j=0, j\ge N_u\},\\
\cdd_{max}&=&\cdd(J_{max})=\{u=\{u_j\}\in\ld: Ju\in \ld\}.
\end{eqnarray}
It is easy to see that $J_{min}$ defined on $\cdd_{min}$ is a symmetric operator, and its adjoint is exactly $J_{max}$ defined on $\cdd_{max}$, see \cite[Ch. VII]{ber1}. One can show that the defect numbers satisfy the inequality
$$
d_\pm(J)=\dim\kker (J_{max}\mp zI)\le d,
$$
where $z\in \{z: \im z>0\}$.  We assume \emph{for the rest of the paper} that the defect numbers $d_\pm(J)$ are equal, \ie $J$ admits a family of self-adjoint extensions. If this family consists  of a single element, then $J_{max}$ is itself self-adjoint,\emph{i.e.}, $d_\pm(J_{max})=0$. Then we say that $J$ is in  “limit point case”, see \cite{ber1, te1} for more details. 

It is well-known that the distributional characteristics of discrete spectra of two distinct self-adjoint extensions of $J$ are the same \cite[Ch. 3, 4]{birso1}. Consequently, our results on the distribution of discrete spectrum of $J$ do not depend on the choice of a particular self-adjoint extension, and we keep it fixed for the rest of the paper. For the sake of simplicity, this self-adjoint extension of $J$ will be denoted by $J$ as well.

\subsection{On Gilbert-Pearson subordinacy theory for Jacobi matrices}\label{s12}
We recall certain basic facts from subordinacy theory for scalar-valued Jacobi matrices, see Khan-Pearson \cite{khape} and the seminal paper Gilbert-Pearson \cite{gipe}.

Let $J=J(\{a_n\},\{b_n\})$ be a Jacobi matrix \eqref{e101-} in the limit point case. Consider the space of generalized eigenvectors corresponding to $\l\in\bc$, \emph{i.e.}, 
\begin{equation}\label{e1012}
(Ju)’(\l)=\l u’(\l), 
\end{equation}
where $u(\l)=\{u_n(\l)\}$ and $u’(\l)=\{u_n(\l)\}_{n>1}$. Note that the vector $u’(\l)$ is not required to lie in $\ell^2$. One says that a solution $u_s(\l)=\{u_{s,n}(\l)\}$ to the above problem is subordinate, if
$$
\lim_{N\to+\infty} \frac{\sum^N_{n=1}|u_{s,n}(\l)|^2}{\sum^N_{n=1}|v_n(\l)|^2}=0
$$
for any other solution $v(\l)=\{v_n(\l)\}$ to the problem, which is linearly independent of $u_s$.

By spectral theorem, we associate the (scalar-valued) standard spectral measure $\mu=\mu(J)$  to $J$. We denote by $\mu_{ac}, \mu_{s}, \mu_{sc}$ and $\mu_d$ the absolutely continuous, singular, singular continuous, and discrete components of $\mu$, respectively. Let $\cm_{ac},\cm_s,\cm_{sc}$ and $\cm_d$ be the minimal supports of these measures.

The following theorem will be repeatedly used in Section \ref{s3}.
\begin{theorem}[{Khan-Pearson \cite[Thm. 3 ]{khape}}]\label{t01}
Let $J$ be the above Jacobi matrix. The minimal supports  $\cm_{ac},\cm_s,\cm_{sc}$, and $\cm_p$ of $\mu_{ac}, \mu_{s}, \mu_{sc}$, and  $\mu_d$ are as follows:
\begin{itemize}
\item $\cm_{ac}=\{ x\in\br:$ no subordinate solution to $(Ju)’=xu’$ exists $\}$,
\item $\cm_s=\{ x\in\br:$ a subordinate solution to $(Ju)’=xu’$ exists $\}$, 
\item $\cm_{sc}=\{ x\in\br:$ a subordinate solution to $(Ju)’=xu’$ exists, but $u’\not\in\ell^2$ $\}$,
\item $\cm_d=\{ x\in\br:$ a subordinate solution to $Ju=xu$ exists, it satisfies boundary condition $b_1u_1+a_1u_2=x u_1$, and $u\in\ell^2$ $\}$.
\end{itemize}
\end{theorem}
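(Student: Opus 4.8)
The plan is to recast each of the four statements as a statement about the non-tangential boundary behaviour of the Weyl--Titchmarsh $m$-function of $J$, and then to transfer it to solutions of the three-term recurrence via the quantitative Gilbert--Pearson / Jitomirskaya--Last dictionary, in the form adapted to the Jacobi setting by Khan--Pearson. Write $m(z)=\la\d_1,(J-z)^{-1}\d_1\ra=\int_\br(t-z)^{-1}\,d\mu(t)$ for $\im z>0$, with $\d_1$ the first coordinate vector and $\mu=\mu(J)$ the spectral measure above; this is a Herglotz function. First I would recall the classical Herglotz / de la Vall\'ee-Poussin description of $\mu$ through the boundary values $m(x+i0)$: up to mutually null sets, $\mu_{ac}$ is carried by $\{x:0<\im m(x+i0)<\infty\}$, $\mu_s$ by $\{x:\im m(x+i0)=+\infty\}$, the atoms of $\mu$ are exactly the $x$ with $\lim_{y\to0^+}y\,\im m(x+iy)=\mu(\{x\})>0$, and $\mu_{sc}$ is carried by the set where $\im m(x+i0)=+\infty$ while $y\,\im m(x+iy)\to0$. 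This already expresses $\cm_{ac},\cm_s,\cm_p,\cm_{sc}$ purely through $m$.

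The main step is to convert this alternative for $m$ into one about the relative growth of transfer-matrix solutions. Fix $x\in\br$; let $u^D$ be the solution of $(Ju)'=xu'$ subject to $b_1u_1^D+a_1u_2^D=xu_1^D$, and let $\ti u$ be a second, linearly independent solution normalized through the remaining boundary phase. For $y>0$ define the length $L=L(y)$ by $2\,\|u^D\|_{L}\,\|\ti u\|_{L}=1/y$, where $\|v\|^2_L=\sum_{n\le\lfloor L\rfloor}|v_n|^2+(L-\lfloor L\rfloor)\,|v_{\lfloor L\rfloor+1}|^2$. The Jitomirskaya--Last inequalities supply an absolute constant $C$ with
\[
\frac{1}{C}\,\frac{\|u^D\|_{L(y)}}{\|\ti u\|_{L(y)}}\ \le\ |m(x+iy)|\ \le\ C\,\frac{\|u^D\|_{L(y)}}{\|\ti u\|_{L(y)}},
\]
uniformly for $x$ in compact sets, together with the companion bound obtained by exchanging $u^D$ and $\ti u$. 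Letting $y\to0$ (so $L(y)\to\infty$), these translate the Herglotz alternative into solution growth: $0<\im m(x+i0)<\infty$ — i.e. $x$ lies in the minimal support of $\mu_{ac}$ — exactly when no solution of $(Ju)'=xu'$ is subordinate, which yields $\cm_{ac}=\{x:\text{no subordinate solution exists}\}$; and a subordinate solution exists exactly when $m(x+i0)$ takes a value in $\br\cup\{\infty\}$, which after deleting a Lebesgue-null, $\mu$-null set yields $\cm_s=\{x:\text{a subordinate solution exists}\}$.

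It remains to split $\cm_s$ into $\cm_p$ and $\cm_{sc}$. If $x$ is an eigenvalue of the fixed self-adjoint extension, its eigenvector $v$ is an $\ell^2$ solution of the full system $Jv=xv$; in particular it satisfies the first-row relation $b_1v_1+a_1v_2=xv_1$, and since $J$ is in the limit point case the linearly independent solution of $(Ju)'=xu'$ is not $\ell^2$, so $v$ is subordinate with $v\in\ell^2$. Conversely, if $\mu(\{x\})>0$ then $x$ is an eigenvalue of $J$ and its eigenvector is the desired $\ell^2$ solution. This identifies $\cm_p$ with the set described in the fourth bullet, and hence $\cm_{sc}=\cm_s\setminus\cm_p$ is exactly $\{x:\text{a subordinate solution to }(Ju)'=xu'\text{ exists but }u'\notin\ell^2\}$, modulo the null sets already removed.

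Two points I expect to need real work. First, establishing the Jitomirskaya--Last two-sided bounds with the stated choice of length $L(y)$: in the Jacobi case the one-step transfer matrices are not unimodular, so both the normalization of $\|\cdot\|_L$ and the Green's-function / Combes--Thomas-type estimates behind the inequalities must be corrected by the weights $a_n$, and one has to check that the constant can be taken uniformly on compact energy sets. Second, the bookkeeping of exceptional sets: the assertion concerns \emph{minimal} supports, so one must verify that the various Lebesgue-null, $\mu_{ac}$-null and $\mu_s$-null sets deleted along the way do not destroy minimality — i.e. that the candidate sets are genuine minimal supports and not merely carriers. This last point is the heart of the Khan--Pearson adaptation of Gilbert--Pearson subordinacy theory to Jacobi matrices.
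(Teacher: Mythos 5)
The paper does not prove this statement at all: Theorem \ref{t01} is quoted as a preliminary from Khan--Pearson \cite{khape}, so there is no internal proof to compare against. Your outline is a recognizable sketch of the \emph{modern} route to subordinacy theory (Herglotz boundary-value theory for $m$ plus two-sided Jitomirskaya--Last estimates), which is genuinely different from Khan--Pearson's original Gilbert--Pearson-style argument; that original argument needs no JL machinery and works directly for arbitrary limit-point Jacobi matrices. This matters here because the paper applies the theorem to \emph{unbounded} Jacobi matrices (e.g.\ $a_n=n^\a$ in Section \ref{s3}), and the JL inequalities with a uniform constant and the length normalization you state are standard only for bounded coefficients; the weighted correction for non-unimodular transfer matrices that you flag as ``needs real work'' is precisely the technical core of your approach, and as it stands it is deferred rather than carried out. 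So the proposal is an outline with its two hardest steps (the weighted JL bounds and the minimal-support bookkeeping) left open.

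There is also one substantive slip in your dictionary for the singular parts. Your own Herglotz decomposition says $\mu_s$ is carried by $\{x:\im m(x+i0)=+\infty\}$, and under the JL correspondence this is the set where the solution $u^D$ \emph{satisfying the boundary condition} $b_1u_1+a_1u_2=xu_1$ is subordinate. The condition ``some subordinate solution exists'' is equivalent (a.e.) to $m(x+i0)\in\br\cup\{\infty\}$, and the portion where $m(x+i0)$ is real and finite --- for instance all of a spectral gap or the resolvent set above the spectrum, where the Weyl solution is subordinate but violates the boundary condition --- can have positive Lebesgue measure while carrying no $\mu_s$-mass. Such a set cannot be ``deleted as a Lebesgue-null, $\mu$-null set'', so the set in your second bullet is a support of $\mu_s$ but in general \emph{not} a minimal support; the same defect propagates to your identification of $\cm_{sc}$. (Khan--Pearson's actual Theorem 3 imposes the boundary condition in the singular supports; the version quoted in the paper elides it, but your argument, if completed, naturally proves the boundary-condition version and does not prove the literal statement as quoted.) The fourth bullet, via the limit-point fact that at most one solution is $\ell^2$, is fine.
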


\subsection{Some facts from Levinson theory on Jacobi matrices}\label{s13}
In Section \ref{s3}, we shall have to compute asymptotics of eigenvector equation \eqref{e1012} for some special Jacobi matrices. An appropriate framework for doing this is the so-called Levinson theory, see Coddington-Levinson \cite{cole} for the case of differential operators and Benzaid-Lutz \cite{belu}, Elaydi  \cite{elay} for the case of finite differences.
 
More concretely, we shall use the following result from Janas-Moszi\'nski \cite{jamo1}. Let $n_0\in\bn$, and $\{A_n\}_{n\ge n_0}$ be a uniformly bounded sequence of elements from $\cm_{d,d}(\bc)$, \emph{i.e.,}  for some $M\ge 0$, we have $||A_n||\le M$  for $n\ge n_0$. Consider vectors
 $x^j=\{x^j_k\}_{k\ge n_0}\in \ld$, where $x^j_{n_0}=e^j, \ j=1,\dots, d$, and $\{e^j\}$ is the standard basis of $\bc^d$. Suppose that
\begin{equation}\label{e1013}
x^j_{n+1}=A_n x^j_n, \quad n\ge n_0.
\end{equation}
 We want to understand the behavior of $x^j_n, \ j=1,\dots,d$, as $n\to +\infty$. The answer is given by the following theorem.
\begin{theorem}[{Janas-Moszy\'nski \cite[Thm. 1.5]{jamo1}}]\label{t02}
In the above notation, let
$$
A_n=V_n+R_n,
$$
and $\{\l_j(n)\}_{j=1,\dots,d}$ be the eigenvalues of $V_n\in\cm_{d,d}(\bc)$. Suppose also that:
\begin{enumerate}
\item $\det A_n\not=0, \ \det V_n\not=0$ for $n\ge n_0$,
\item  $\{R_n\}\in \ell^1(\cm_{d,d}(\bc))$, and $\{V_n\}$ is of bounded variation, \emph{i.e.,} 
$$
||\{V_n\}||_{BV}:=\sum_n ||V_{n+1}-V_n||<\infty.
$$

\item the limit $V_\infty:=\lim_{n\to+\infty} V_n$ has non-zero distinct eigenvalues $\{\l_j(\infty)\}_{j=1,\dots,d},$ $|\l_j(\infty)|\not =|\l_k(\infty)|$, and
$$
\lim_{n\to+\infty} \l_j(n)=\l_j(\infty).
$$
The eigenvectors corresponding to $\{\l_j(\infty)\}_{j=1,\dots,d}$, are denoted by $\{v^j(\infty)\}$$_{j=1,\dots,d}$, $v_j(\infty)\in\bc^d$.
\end{enumerate}

Then there is a basis $x^j=\{x^j_n\}, \ j=1,\dots,d$, for the solutions of \eqref{e1013} such that
$$
x^j_n=\lp \prod_{k=n_0}^{n-1} \l_j(k)\rp\, \lp v^j(\infty) +\bar o(1)\rp,\quad n\to+\infty.
$$
\end{theorem}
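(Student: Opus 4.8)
The plan is to follow the classical discrete Levinson theory in the spirit of Benzaid--Lutz \cite{belu} and Elaydi \cite{elay}, in three stages: a bounded-variation diagonalization reducing the recursion $x_{n+1}=A_nx_n$ to a perturbation of a diagonal one; the absorption of the diagonal part of that perturbation into the eigenvalue products; and a contraction-mapping (Volterra) argument producing the asserted basis of solutions.

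First I would diagonalize $\{V_n\}$. Since $V_\infty$ has $d$ distinct nonzero eigenvalues and $V_n\to V_\infty$, the matrices $V_n$ have $d$ distinct eigenvalues $\l_1(n),\dots,\l_d(n)$ for all $n$ large, and their spectral projections $P_j(n)$ are given, near $V_\infty$, by a fixed contour integral of the resolvent; hence $P_j(n)$ depends Lipschitz-continuously on $V_n$, so $\{V_n\}\in BV$ forces $\{P_j(n)\}\in BV$ and $P_j(n)\to P_j(\infty)$. Choosing reference vectors $w_j$ with $P_j(\infty)w_j\ne0$, normalizing $P_j(n)w_j$ to converge to $v^j(\infty)$, and taking these vectors as the columns of $T_n$, one obtains invertible matrices with $T_n^{-1}V_nT_n=\Lambda_n:=\mathrm{diag}(\l_1(n),\dots,\l_d(n))$, $T_n\to T_\infty$ (whose $j$-th column is $v^j(\infty)$), $\{T_n^{-1}\}$ bounded, and $\{T_n\}\in BV$. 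Putting $x_n=T_ny_n$ and using $V_nT_n=T_n\Lambda_n$ yields
\[
y_{n+1}=T_{n+1}^{-1}A_nT_ny_n=(\Lambda_n+E_n)y_n,\qquad E_n:=T_{n+1}^{-1}(T_n-T_{n+1})\Lambda_n+T_{n+1}^{-1}R_nT_n ,
\]
and $\{E_n\}\in\ell^1(\cm_{d,d}(\bc))$, because $\{T_n\}\in BV$, $\{T_n^{-1}\}$ and $\{\Lambda_n\}$ are bounded, and $\{R_n\}\in\ell^1$. Here $\det V_n\ne0$ guarantees that $\Lambda_n$ is invertible, while $\det A_n\ne0$ keeps the solution space $d$-dimensional and lets one transfer the asymptotic basis, constructed for large $n$, back down to $n\ge n_0$.

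Next, since $(E_n)_{jj}\in\ell^1$ and $\l_j(n)\to\l_j(\infty)\ne0$, the infinite product $\prod_k\big(1+(E_k)_{jj}/\l_j(k)\big)$ converges to some $c_j\ne0$; writing $\d_j(n):=\l_j(n)+(E_n)_{jj}$ one gets $\prod_{k=n_0}^{n-1}\d_j(k)=(c_j+\bar o(1))\prod_{k=n_0}^{n-1}\l_j(k)$, so it suffices to treat $y_{n+1}=(D_n+F_n)y_n$ with $D_n=\mathrm{diag}(\d_1(n),\dots,\d_d(n))$ and $F_n$ the off-diagonal part of $E_n$ (so $F_n\in\ell^1$, with zero diagonal). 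As the moduli $|\l_j(\infty)|$ are pairwise distinct, the Levinson dichotomy holds: for $n$ large and $j\ne k$ one has $|\d_k(n)/\d_j(n)|\le\theta<1$ or $\ge\theta^{-1}$, so the relevant ratio products decay geometrically. For fixed $j$ I would look for a solution $y^j_n=\big(\prod_{k=n_0}^{n-1}\d_j(k)\big)z^j_n$ and convert the recursion for $z^j_n$ into a summation equation $z^j_n=e_j+\sum_m G_j(n,m)F_mz^j_m$, where the Green kernel $G_j$ sums forward on the coordinates $k$ with $|\l_k(\infty)|<|\l_j(\infty)|$ and backward on those with $|\l_k(\infty)|\ge|\l_j(\infty)|$; the geometric decay above together with $\{F_n\}\in\ell^1$ make the corresponding operator a contraction on the space of bounded $\bc^d$-valued sequences indexed by $n\ge N$, for $N$ large, which yields $z^j_n\to e_j$. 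Then $x^j_n=T_nz^j_n\prod_{k=n_0}^{n-1}\d_j(k)=\big(\prod_{k=n_0}^{n-1}\l_j(k)\big)\big(c_jv^j(\infty)+\bar o(1)\big)$, and rescaling $x^j$ by $c_j^{-1}$ gives the stated asymptotics; linear independence of $x^1,\dots,x^d$ follows from the invertibility of $T_n$ together with the distinct growth rates $\big|\prod_{k=n_0}^{n-1}\l_j(k)\big|$.

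The hard part will be the two ``chain-rule'' steps: verifying that $\{V_n\}\in BV$ really forces $\{T_n\}\in BV$ (Lipschitz dependence of the spectral projections, which needs a uniform positive lower bound on the spectral gaps of $V_n$, available because $V_\infty$ has simple spectrum), and arranging the Green kernel $G_j(n,m)$ so that the contraction estimate genuinely closes. Both are routine in Levinson theory but delicate; since this is precisely \cite[Thm.~1.5]{jamo1}, one may of course invoke it directly, the above being the route one would take to reprove it.
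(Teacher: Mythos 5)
The paper does not prove this statement at all: it is imported verbatim from Janas--Moszy\'nski \cite[Thm.~1.5]{jamo1} (in the spirit of Benzaid--Lutz \cite{belu}), so there is no internal proof to compare against. Your reconstruction follows exactly the standard discrete Levinson route used in that reference --- BV-diagonalization of $\{V_n\}$ via Riesz projections (where the uniform spectral gap from the simple spectrum of $V_\infty$ indeed gives $\{T_n\}\in BV$), absorption of the diagonal $\ell^1$ error into the products $\prod_k\lambda_j(k)$, and the forward/backward Green-kernel contraction based on the dichotomy coming from the distinct moduli, with the basis transported back to $n\ge n_0$ by invertibility of $A_n$ --- and I see no gap in it.
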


Similar results for Jacobi matrices in the critical double root case are in Janas-Naboko-Shernova \cite{jana5}.

\subsection{A version of Schur-Frobenius lemma}\label{s14}
To give a classical version of Schur-Frobenius lemma, we introduce some notation. Let $H$ be a separable Hilbert space, $H_1$ be its closed subspace, and $H_2=H_1^\perp$ so that $H=H_1\oplus H_2$. Let $\cb(H)$ denote the algebra of bounded operators on $H$, and $\ca\in\cb(H), \ \ca^*=\ca$. Let 
\begin{equation}\label{e1014}
\ca=
\begin{bmatrix}
A&B\\
B^*&C 
\end{bmatrix}: H_1\oplus H_2\to H_1\oplus H_2
\end{equation}
be its block representation with respect to the above orthogonal decomposition of $H$. In particular, $A^*=A$ and $C^*=C$. The orthogonal projectors on $H_1, H_2$ are denoted by $P_{H_1}, P_{H_2}$, respectively.

The following useful proposition is well-known, see Tretter \cite[Sect. 1.6]{tre} for instance.
\begin{lemma}[{Schur-Frobenius lemma}]\label{l1} Let $\ca\in\cb(H)$ be as above and suppose that $C\in\cb(H_2)$ is boundedly invertible, \emph{i.e.} $C^{-1}\in \cb(H_2)$. 

Then the  operator $\ca$ is positive, $\ca\ge 0$, if and only if:
\begin{enumerate}
\item $A\ge 0_{H_1}$ and $C\ge 0_{H_2}$,
\item $A-B C^{-1} B^*\ge 0_{H_1}$.
\end{enumerate}
\end{lemma}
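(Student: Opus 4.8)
The plan is to prove the Schur--Frobenius lemma by the standard completion-of-squares factorization. First I would write down the factorization
\begin{equation*}
\ca=
\begin{bmatrix}
I_{H_1}& B C^{-1}\\
0& I_{H_2}
\end{bmatrix}
\begin{bmatrix}
A-B C^{-1} B^*& 0\\
0& C
\end{bmatrix}
\begin{bmatrix}
I_{H_1}& 0\\
C^{-1} B^*& I_{H_2}
\end{bmatrix},
\end{equation*}
which is verified by a direct block multiplication using $C^*=C$ (so $(C^{-1})^*=C^{-1}$) and $A^*=A$. Writing $T=\begin{bmatrix} I_{H_1}& B C^{-1}\\ 0& I_{H_2}\end{bmatrix}$, the outer factors are $T$ and $T^*$, and $T$ is boundedly invertible on $H$ with inverse $\begin{bmatrix} I_{H_1}& -B C^{-1}\\ 0& I_{H_2}\end{bmatrix}$, which is where the hypothesis $C^{-1}\in\cb(H_2)$ is used. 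Hence $\ca=T D T^*$ with $D=\mathrm{diag}(A-BC^{-1}B^*,\,C)$, and since $T$ is invertible, $\ca\ge 0$ if and only if $D\ge 0$, i.e. if and only if $A-BC^{-1}B^*\ge 0_{H_1}$ and $C\ge 0_{H_2}$.

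It remains to match this with the stated conditions, namely to show that under the standing assumption ``$C$ boundedly invertible'' the pair $\{A-BC^{-1}B^*\ge 0,\ C\ge 0\}$ is equivalent to $\{A\ge 0,\ C\ge 0,\ A-BC^{-1}B^*\ge 0\}$. One direction is trivial. For the other, I would observe that if $\ca\ge 0$ then every principal corner is positive; concretely, testing $\ca$ on vectors of the form $u\oplus 0$ with $u\in H_1$ gives $\la A u,u\ra=\la \ca(u\oplus 0),u\oplus 0\ra\ge 0$, so $A\ge 0_{H_1}$ comes for free, and similarly $C\ge 0_{H_2}$ by testing on $0\oplus v$. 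Thus the extra clause $A\ge 0_{H_1}$ in condition (1) is automatically implied and adds no restriction; including it just records a convenient necessary condition. Assembling: $\ca\ge 0\iff C\ge 0$ and $A-BC^{-1}B^*\ge 0 \iff$ conditions (1) and (2).

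I do not expect a genuine obstacle here; the only point requiring a little care is the bookkeeping in the block multiplication (keeping track of which entries are self-adjoint so that the product really is $T D T^*$ and not merely $T D S$ for some unrelated $S$) and making explicit that bounded invertibility of $C$ is exactly what makes $T^{\pm1}$ bounded, so that the congruence $\ca\mapsto T^{-1}\ca (T^{-1})^*$ preserves positivity in both directions. Everything else is formal. If one wanted to emphasize the operator-theoretic content, one could alternatively phrase the equivalence $D\ge 0\iff \ca\ge0$ as: for $x=u\oplus v\in H$, setting $w=u+BC^{-1}v$ one has $\la \ca x,x\ra=\la (A-BC^{-1}B^*)w,w\ra+\la C v,v\ra$, and the substitution $u\mapsto w$ is a bijection of $H_1$ for each fixed $v$; this is the same computation unpacked coordinate-wise and I would include it as the one displayed identity that drives the proof.
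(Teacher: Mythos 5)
Your main argument is correct and is essentially the same approach as the paper's: the paper does not prove Lemma \ref{l1} itself (it cites Tretter), but its proof of the generalized Lemma \ref{l2} uses exactly the same ingredients — corner compressions for the necessity of (1), a block-triangular congruence to extract the Schur complement, and a completion of squares for sufficiency — and your factorization $\ca=TDT^*$ with $T$ boundedly invertible (which is where $C^{-1}\in\cb(H_2)$ enters) is the cleanest packaging of that mechanism.

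One correction to your closing aside: the coordinate-wise identity is misstated — the square is completed in the $H_2$-variable, with the Schur complement attached to $u$, not to $w=u+BC^{-1}v$. Indeed, since $T^*x=u\oplus(v+C^{-1}B^*u)$, the factorization gives
\[
\la \ca x,x\ra=\la (A-BC^{-1}B^*)u,\,u\ra+\la C\,(v+C^{-1}B^*u),\,v+C^{-1}B^*u\ra,
\]
whereas $\la (A-BC^{-1}B^*)w,w\ra+\la Cv,v\ra$ with your choice of $w$ produces spurious cross terms involving $ABC^{-1}v$ and already fails for $1\times1$ blocks. This corrected identity is precisely what the paper exploits in Lemma \ref{l2}, writing $h=C^{-1/2}B^*f_1$, $e=C^{1/2}f_2$ so that the form equals $\|h+e\|^2$ (there, modulo finite rank). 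Since the aside is not needed for your factorization argument, the proof stands as written once this remark is fixed or dropped.
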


We need a generalization of the above result on Fredholm operators which is most probably a part of folklore on the subject. We give its proof for the completeness of the presentation. Let $\cf=\cf(H)$ be the ideal of finite rank operators on $H$, and $\bff=\bff(H)=\cb(H)/\cf$ be the factor-algebra of bounded operators on $H$ modulo $\cf$. For $X, Y\in\cb(H)$ we write $X\eeq Y$, if $X=Y$ as elements of $\bff$, or, equivalently, $X-Y\in\cf$. We say that $X$ equals to $Y$ modulo $\cf$. We say that $X\gee Y$ ($X\gse Y$), if 
$X-X^*, Y-Y^*\in\cf$, and $X-Y+F\ge 0$ ($X-Y+F>0$, respectively) for some $F\in\cf$. Similarly, $X\lee Y$ ($X\lse Y$), if there is a $F\in\cf(H)$ with the property $X-Y+F\le 0$ ($X-Y+F<0$, respectively). As usual, an operator $C\in\cb(H)$ is invertible in $\bff$, if there is a $C_1\in\cb(H)$ such that $C C_1\eeq I_H, C_1 C\eeq I_H$. More generally, $X$ satisfies a property modulo $\cf$, if $X+F$ has the claimed property for some $F\in\cf$.

\begin{lemma}[{Generalized Schur-Frobenius lemma}] \label{l2} Let $\ca\in \cb(H)$ be an operator as above. Suppose that $A=A^*$  in $\bff(H_1)$, $C=C^*$ in  $\bff(H_2)$, and $C$ is invertible in $\bff(H_2)$.

Then $\ca\gee 0$ if and only if:
\begin{enumerate}
\item $A\gee 0$ and $C\gee 0$,
\item $A-BC_1B^*\gee 0$, where $C_1$ is  the inverse of $C$ modulo $\cf$.
\end{enumerate}
\end{lemma}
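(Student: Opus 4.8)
The plan is to mimic the proof of the classical Schur--Frobenius lemma (Lemma \ref{l1}), but carried out in the Calkin-type quotient $\bff(H)$ rather than in $\cb(H)$, being careful that the finite-rank ``error'' operators one picks up at each step can be absorbed into a single finite-rank perturbation of $\ca$. The key algebraic identity is the usual congruence: with $C_1$ a bounded operator that inverts $C$ modulo $\cf$, set
\begin{equation*}
T=\begin{bmatrix} I_{H_1} & -BC_1\\ 0 & I_{H_2}\end{bmatrix},\qquad
T^*\ca T \eeq \begin{bmatrix} A-BC_1B^* & 0\\ 0 & C\end{bmatrix}.
\end{equation*}
The point is that $T$ is boundedly invertible with inverse $\begin{bmatrix} I & BC_1\\ 0 & I\end{bmatrix}$, and a direct block multiplication shows $T^*\ca T - \mathrm{diag}(A-BC_1B^*,\,C)$ is a finite-rank operator, because every term in which the difference fails to vanish exactly contains a factor $CC_1-I_{H_2}$ or $C_1C-I_{H_2}$, each of which lies in $\cf$ by hypothesis, and $\cf$ is a two-sided ideal. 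So first I would record this congruence and the fact that $X\mapsto S^*XS$, for $S$ boundedly invertible, maps operators that are $\gee 0$ to operators that are $\gee 0$ (and conversely), since $S^*(X+F)S = S^*XS + S^*FS$ and $S^*FS\in\cf$, while positivity is preserved by congruence with an invertible operator.

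Granting the congruence, the ``if'' direction is immediate: if (1) and (2) hold, then the block-diagonal operator $\mathrm{diag}(A-BC_1B^*, C)$ is $\gee 0$ (a direct sum of two operators each $\gee 0$ is $\gee 0$, with error the direct sum of the two finite-rank errors), hence $T^*\ca T\gee 0$, hence $\ca = (T^{-1})^* (T^*\ca T) T^{-1} \gee 0$ by the congruence-invariance just noted. For the ``only if'' direction, suppose $\ca\gee 0$. Applying the congruence backwards gives $\mathrm{diag}(A-BC_1B^*, C)\gee 0$; compressing a positive-mod-$\cf$ operator to a reducing subspace keeps it positive mod $\cf$ (if $\ca + F\ge0$ then $P_{H_i}(\ca+F)P_{H_i}\ge 0$ on $H_i$ and $P_{H_i}FP_{H_i}\in\cf$), which yields $A-BC_1B^*\gee 0$ and $C\gee 0$ at once, i.e. condition (2) and half of condition (1). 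To get $A\gee 0$, compress the original $\ca\gee 0$ to $H_1$: $P_{H_1}\ca P_{H_1}=A$, so $A\gee 0$. This closes both directions.

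There is one genuine subtlety I would want to nail down, and I expect it to be the main (only) real obstacle: the statement of the lemma assumes $A=A^*$ in $\bff(H_1)$ and $C=C^*$ in $\bff(H_2)$, i.e. $A$ and $C$ are self-adjoint only modulo $\cf$, and correspondingly the hypothesis ``$C$ invertible in $\bff(H_2)$'' must be used in the form: there is $C_1\in\cb(H_2)$ with $CC_1\eeq I_{H_2}$ and $C_1C\eeq I_{H_2}$, and one should check that such a $C_1$ can be chosen self-adjoint modulo $\cf$ (replace $C_1$ by $\tfrac12(C_1+C_1^*)$ and use $C^*-C\in\cf$ together with a Neumann-series/finite-rank-correction argument), so that $A-BC_1B^*$ is self-adjoint modulo $\cf$ and the relation ``$\gee 0$'' is meaningful for it. I would also remark that since everything is done modulo $\cf$, the choice of $C_1$ among all inverses-mod-$\cf$ is immaterial: two such differ by a finite-rank operator, so $BC_1B^*$ is well defined mod $\cf$. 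Once these bookkeeping points about self-adjointness-mod-$\cf$ and well-definedness are settled, the proof is just the classical one pushed through the quotient, and no analysis beyond ``$\cf$ is a two-sided $*$-ideal'' and ``congruence by an invertible operator preserves positivity'' is needed.
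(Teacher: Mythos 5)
Your route is the classical Schur--Frobenius congruence pushed through the quotient $\bff(H)$, and in one half it genuinely differs from the paper's proof, to your advantage. The paper obtains the implication $\ca\gee 0\Rightarrow(1),(2)$ essentially as you do (compressions to $H_1,H_2$, plus a one-sided congruence of $\ca$ by $\bigl[\begin{smallmatrix} I&-BC_1\\ 0&0\end{smallmatrix}\bigr]$ and its adjoint), but for the converse it does not use an invertible congruence at all: it first reduces, by finite-rank corrections (replacing $C$ by $\re C$ and then by $C+\ep P_{\kker C}$, where $\dim\kker C<\infty$ because $\kker C\subset\kker(I_{H_2}+F_2)$, plus a weak-convergence argument excluding $0\in\s(C)$), to the case $C\ge\ep I_{H_2}$, and then completes the square directly in the quadratic form using $C^{\pm1/2}$ and $C^{1/2}C_1C^{1/2}\eeq I_{H_2}$. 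Your two-sided congruence by a boundedly invertible triangular operator makes that whole reduction unnecessary, and your bookkeeping remarks are correct and even automatic: $C_1^*$ inverts $C^*\eeq C$ modulo $\cf$, hence $C_1^*\eeq C_1$ by uniqueness of the inverse in $\bff(H_2)$, so $A-BC_1B^*$ is self-adjoint modulo $\cf$ and independent of the choice of $C_1$ up to $\cf$.

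The one step you must repair is the congruence identity itself: as written, with $T=\bigl[\begin{smallmatrix} I&-BC_1\\ 0&I\end{smallmatrix}\bigr]$, the operator $T^*\ca T$ has $(1,1)$ block exactly $A$ and $(1,2)$ block $B-ABC_1$, which contains no factor $CC_1-I_{H_2}$ or $C_1C-I_{H_2}$ and is not finite rank in general; so $T^*\ca T-\mathrm{diag}(A-BC_1B^*,C)\notin\cf$ and the claimed identity fails. The triangular factor must act on the other side: with the same $T$ one has
\begin{equation*}
T\,\ca\,T^*=
\begin{bmatrix}
A-BC_1^*B^*-BC_1(I-CC_1^*)B^* & B(I-C_1C)\\
(I-CC_1^*)B^* & C
\end{bmatrix}
\eeq
\begin{bmatrix}
A-BC_1B^* & 0\\
0 & C
\end{bmatrix},
\end{equation*}
using $C_1C\eeq I_{H_2}$, its adjoint combined with $C\eeq C^*$ to get $CC_1^*\eeq I_{H_2}$, and $C_1\eeq C_1^*$. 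Equivalently, keep the form $T^*\ca T$ but take $T=\bigl[\begin{smallmatrix} I&0\\ -C_1B^*&I\end{smallmatrix}\bigr]$. With this one-line correction, the rest of your argument (invertible congruence preserves and reflects $\gee 0$, compressions give the diagonal blocks, direct sums of positive-mod-$\cf$ operators are positive mod $\cf$) goes through as you describe.
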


\begin{proof}
We start with several auxiliary facts. First, we have $C\gee 0$, and for an $F\in \cf(H_2)$, we see that $C+F\ge 0$. So, replacing $C$ with $C+F$, we can suppose $C\ge 0$.

Second, under the assumptions of the lemma, $C$ and $C_1$ can be chosen self-adjoint without loss of generality. In fact, we have 
\begin{equation}\label{e1015}
C_1C\eeq CC_1\eeq I_{H_2},
\end{equation} 
and so $C^*C_1^*\eeq C_1^*C^*\eeq  I_{H_2}$. Since $C\eeq C^*$,  one can replace the operator $C$ by $\re C$ in the above relation. That is,
$$
C_1(\re C)\eeq (\re C)C_1\eeq I_{H_2},
$$
and we can suppose $C$ to be self-adjoint, $C=C^*$. Exactly the same argument applies to $C_1$.

Third, we have $C\ge 0_{H_2}$, but we can assume $C$ to be strictly positive without loss of generality. Indeed, relation \eqref{e1015} implies
$$
CC_1=I_{H_2}+F_1,\quad C_1C=I_{H_2}+F_2
$$
for certain $F_1,F_2\in \cf(H_2)$. Above,  $C, C_1$ are self-adjoint. In particular, $\kker C\subset \kker (I_{H_2}+F_2)$, and so $\dim\kker C$ is of finite dimension. It is easy to see that, for a fixed $\ep>0$, $C\eeq C+\ep P_{\kker C}$, and so we can replace $C$ with the latter operator. 

Let us prove that this new $C$ is strictly positive, \ie $C>0_{H_2}$. This is equivalent to show that $\s(C)\subset (0,+\infty)$, or $0\not\in \s(C)$. By contradiction, suppose that this is not the case, and $0\in \s(C)$. Since $\kker C=0$, this means that there is a sequence $\{f_n\}\subset H_2, ||f_n||=1$, such that $Cf_n\to 0$ and $f_n\stackrel w\to 0$ as $n\to +\infty$. As usual, the symbol $\stackrel w\to$ stays for the weak convergence. Consequently,
\begin{equation}\label{e101}
C_1Cf_n=(I_{H_2}+F_2)f_n=f_n+ F_2 f_n,
\end{equation}
as $n\to+\infty$. Notice  that the weak convergence of the sequence $\{f_n\}$ implies that $F_2 f_n\to 0$, since $F_2\in \cf(H_2)$. The LHS of \eqref{e101} goes to 0 by the choice of $\{f_n\}$, while the RHS of this relation is
$$
||f_n+F_2 f_n||\ge ||f_n||-||F_2f_n||\ge \frac12
$$
for $n$ large enough. Thus, we indeed have $C>0_{H_2}$, or, more precisely, $C>\ep I_{H_2}$.

We go now to the proof of the claim of the lemma. 

Let us first prove the direct implication. Recalling notation \eqref{e1014}, we have that $\ca\gee 0_H$, or there is a $F_0\in\cf(H)$ so that $\ca+F_0\ge 0_H$. For $f_1\in H_1$ we get
$\la \ca f,f\ra=\la (A+F_0)f,f\ra \ge 0$, consequently $A+P_{H_1}F_0 P_{H_1}\ge 0_{H_1}$, and $A\gee 0_{H_1}$. Similarly one obtains $C\gee 0_{H_2}$. Then, since $\ca\gee 0_H$, we also have
$$
\begin{bmatrix}
I& -BC_1\\
0& 0
\end{bmatrix}
\begin{bmatrix}
A& B\\
B^*& C
\end{bmatrix}
\begin{bmatrix}
I& 0\\
-C_1^*B^*& 0
\end{bmatrix}
\gee 0_H.
$$
Computing the operator on the LHS of this relation, we get
$$
\begin{bmatrix}
A-BC_1B^*&  0\\
0& 0
\end{bmatrix}\gee 0_H,
$$
which implies $A-BC_1B^*\gee 0_{H_1}$.

Turning to the proof of the inverse implication, we look at 
\begin{eqnarray*}
\la \ca f, f\ra&=&
\lt\la
\begin{bmatrix}
A& B\\
B^*& C
\end{bmatrix}
\,
\begin{bmatrix}
f_1\\
f_2
\end{bmatrix},
\,
\begin{bmatrix}
f_1\\
f_2
\end{bmatrix},
\rt\ra\\
&=&\la Af_1,f_1\ra+2\re\la f_2,B^*f_1\ra+\la C f_2,f_2\ra,
\end{eqnarray*}
where for an arbitrary $f\in H$ we set $f=f_1\oplus f_2,\ f_1\in H_1, f_2\in H_2$. By the hypotheses of the lemma,
\begin{eqnarray}
\la Af_1,f_1\ra&+&2\re\la f_2,B^*f_1\ra+\la C f_2,f_2\ra \nn\\
&\ge&\la BC_1B^* f_1,f_1\ra + 2\re \la f_2, B^*f_1\ra +\la C f_2, f_2\ra +\la F f,f\ra, \label{e10301}
\end{eqnarray}
where $F\in \cf(H)$. Assuming $C>0_{H_2}$,
let $g:=B^* f_1$ and then $h:=C^{-1/2}g=C^{-1/2}B^* f_1, \ e:=C^{1/2} f_2$. We continue as
\begin{eqnarray*}
\la BC_1B^* f_1,f_1\ra &+&2\re \la f_2, B^*f_1\ra +\la C f_2, f_2\ra=\la C_1 g,g\ra + 2\re \la e, C^{-1/2}g\ra +||e||^2 \nn\\
&=&\la C_1 C^{1/2} h, C^{1/2} h\ra+2\re\la e,h\ra+||e||^2. 
\end{eqnarray*}
Now, $C^{1/2} C_1 C^{1/2}\eeq I_{H_2}$, so quadratic form \eqref{e10301} equals to $||h+e||^2\ge 0$ modulo $\cf$.

Hence, we obtain that $\ca\gee 0$, and the proof of the lemma is complete.
\end{proof}

It is worth mentioning that the above reasoning gives a similar result for Calkin algebra, \ie the factor space of $\cb(H)$ over the compact operators instead the finite rank ones.

\section{Proof of Theorem \ref{t1}}\label{s01}
We are mainly interested in sufficient conditions on the presence of the discrete spectrum for block Jacobi matrices $J$ \eqref{e1}, as well as in an upper bound for the counting function of eigenvalues of $J$ in this situation.

Suppose that the spectrum $\s(J)$ of $J$ is purely discrete in the interval $I'$. We denote by $n(J; I'')$ the counting function of the eigenvalues of $J$ in the interval $I'', \ I''\subset I'$. Of course, the eigenvalues are numbered taking into account their multiplicity, \ie
\begin{equation}\label{e2}
n(J; I''):=\#(\s(J)\cap I'')=\#\{j: \l_j(J)\in I''\}.
\end{equation}
In general, we deal with unbounded operators having a dense domain containing $\ell^2_{d,\emptyset} \subset \ld$, see \eqref{e1016}; generically, we apply them to vectors from $\ell^2_{d,\emptyset}$, and so the expressions we manipulate are well-defined.

\begin{theorem}[{=Theorem A}]\label{t1} Let $J$ be a block Jacobi matrix \eqref{e1}. Suppose that:
\begin{enumerate} 
\item for some fixed $c\in\br$, there is a $N_0$ such that $B_{2n}\le c I_d$ for $n\ge N_0$.  
\item we have
$$
\lim_{n\to\infty} B_{2n-1}=+\infty
$$ 
in the sense of \eqref{e102-}.
\end{enumerate}
Then the spectrum of $J$ (denoting an arbitrary self-adjoint extension of $J_{min}$) is discrete in $(c,+\infty)$, and it accumulates to $+\infty$ only.

Moreover, for $a>0$
\begin{equation}\label{e3}
n(J; (c, c+a))\le d\cdot \cn(c+a)+\ind J_{min},
\end{equation}
where the quantity $\cn(c+a)$ is defined in \eqref{e6}, \eqref{e71}.
\end{theorem}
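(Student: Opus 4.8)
The plan is to use the variational principle for self-adjoint operators bounded from below, applied to a cleverly chosen increasing family of subspaces, exactly as the authors hint with the reference to Weidmann. The key observation is that the essential spectrum (and the distribution of eigenvalues above $c$) does not depend on the choice of self-adjoint extension, so I may work with a convenient one; moreover it suffices to bound, for each $a>0$, the dimension of any subspace on which the quadratic form $\langle (J-c)u,u\rangle$ is ``large negative'' or below the threshold $a$. Concretely, $\sigma(J)\cap(c,+\infty)$ being discrete and accumulating only at $+\infty$ is equivalent to: for every $a>0$ the spectral subspace of $J$ corresponding to $(-\infty, c+a)$ has finite dimension modulo the contribution from the part of the spectrum below $c$. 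So I would fix $a>0$ and estimate $n(J;(c,c+a))$ directly.

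First I would isolate the mechanism: split $\ld = \ldo \oplus \lde$ via $P_o, P_e$. Using condition (1), on the even subspace the diagonal part satisfies $B_{2n}\le cI_d$ for $n\ge N_0$; using condition (2), on the odd subspace $B_{2n-1}\ge M I_d$ eventually, for $M$ as large as we like. The heuristic is that a vector $u$ with $\langle Ju,u\rangle < c+a$ must have its odd components $u_{2n-1}$ small (since $B_{2n-1}$ is huge there), hence, by the three-term structure, the off-diagonal couplings $A_n$ essentially cannot make the even part's form drop much below $c$ either. To make this rigorous I would introduce, for a threshold parameter $M$, the projection $Q_M$ onto the finitely many ``bad'' odd coordinates where $B_{2n-1} < M I_d$ together with a finite block of even coordinates (indices $\le$ some $N_1(M)$), plus the finitely many coordinates forced by $\ind J_{min}$ / the defect space of $J_{min}$. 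The claim is that on the orthogonal complement of this finite-dimensional space, the form $\langle (J-c)u,u\rangle$ is bounded below by $0$ (or by $a$, once $M=M(a)$ is chosen large enough), which by the min-max principle gives $n(J;(c,c+a)) \le \dim(\text{that finite space}) = d\cdot\cn(c+a) + \ind J_{min}$, matching \eqref{e3}. The definition of $\cn(c+a)$ in \eqref{e6}, \eqref{e71} is presumably exactly the count of odd indices with $B_{2n-1}$ not yet $\ge (c+a)I_d$ plus a buffer of even indices; I would phrase the estimate so as to land on that.

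The quantitative heart is a local Cauchy--Schwarz / completing-the-square estimate on the three-term form restricted to a triple $(u_{2n-1}, u_{2n}, u_{2n+1})$. Writing the form as $\sum_k \langle B_k u_k, u_k\rangle + 2\sum_k \re\langle A_k u_{k+1}, u_k\rangle$, I would group each cross term $2\re\langle A_{2n-1}u_{2n}, u_{2n-1}\rangle$ (and $2\re\langle A_{2n}u_{2n+1},u_{2n}\rangle$) with the large positive term $\langle B_{2n-1}u_{2n-1},u_{2n-1}\rangle$: for any $\varepsilon>0$, $2\re\langle A u_{2n}, u_{2n-1}\rangle \ge -\varepsilon \|u_{2n-1}\|^2 - \varepsilon^{-1}\|A\|^2\|u_{2n}\|^2$, and since $B_{2n-1}$ is eventually $\ge M I_d$ with $M$ arbitrary, the $\varepsilon\|u_{2n-1}\|^2$ is absorbed with room to spare, leaving a contribution $-\varepsilon^{-1}\|A_{2n-1}\|^2\|u_{2n}\|^2$ on the even side. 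Here is the subtle point the authors stress in their Remark: this must go through \emph{without any control on $\|A_n\|$}. The trick is that we get to choose $\varepsilon$ depending on $n$ and hence on $\|A_{2n-1}\|$, $\|A_{2n}\|$: pick $\varepsilon_n$ so that $\varepsilon_n^{-1}(\|A_{2n-1}\|^2 + \|A_{2n-2}\|^2)$ is, say, $\le 1$, i.e. $\varepsilon_n \approx \|A_{2n-1}\|^2+\|A_{2n-2}\|^2$; this is legitimate precisely because the odd diagonal entries tend to $+\infty$ with \emph{no rate}, so for each $n$ large enough $B_{2n-1} \ge (c+a+\varepsilon_n)I_d$ eventually absorbs whatever $\varepsilon_n$ we need. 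Then the even side of the form is bounded below by $\sum_{n}\langle(B_{2n} - c - \text{(something} \le 1))u_{2n},u_{2n}\rangle$-type expression, and since $B_{2n}\le cI_d$... wait, that goes the wrong way, so instead one keeps a genuine gap $a$: choose the absorption so that the leftover on the even side is $\ge -(a/2)\|u_{2n}\|^2$ say, so $\langle(J-c)u,u\rangle \ge -a\|u\|^2/2 \ge 0$ for the relevant purposes — I would tune constants in the write-up.

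The main obstacle I anticipate is making the ``no rate needed'' absorption uniform in a way that produces a single finite-dimensional exceptional subspace rather than an $n$-dependent one. The resolution: for a \emph{fixed} target $a$, once I commit to the weight sequence $\varepsilon_n$ (which depends only on $\{A_n\}$, not on $u$), condition (2) guarantees there is a single index $N_1$ beyond which $B_{2n-1}\ge (c+a+\varepsilon_n)I_d$ for all $n\ge N_1$ simultaneously — because ``$B_{2n-1}\to+\infty$'' means: for the cutoff value $c+a+\varepsilon_n$ at index $n$, eventual dominance holds; one has to be slightly careful since $\varepsilon_n$ itself may grow, but $B_{2n-1}$ grows without bound so for each $n$ we need $B_{2n-1} \ge (c+a+\varepsilon_n)I_d$ and the set of $n$ where this fails is... not obviously finite if $\varepsilon_n \to \infty$ faster than $B_{2n-1}$. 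The correct fix is to not absorb into $B_{2n-1}$ at the same index but rather to note we only need \emph{one} large positive odd slot adjacent to each even slot; more cleanly, pair each $u_{2n}$ with only $u_{2n-1}$ (not $u_{2n+1}$), handle $A_{2n}$ via $u_{2n+1}$ paired with $u_{2n}$ — then each $u_{2n-1}$ carries two cross-terms ($A_{2n-1}$ with $u_{2n}$, and $A_{2n-2}^*$ with $u_{2n-2}$), needing $B_{2n-1}\ge (c+a+\varepsilon_n)I_d$ with $\varepsilon_n = \varepsilon(\|A_{2n-1}\|^2 + \|A_{2n-2}\|^2)$; but now observe we are free to choose the \emph{constant} in front, so set $\varepsilon_n := \delta(1+\|A_{2n-1}\|^2+\|A_{2n-2}\|^2)$ and absorb $\delta^{-1}\cdot(\cdot)\|u_{2n}\|^2 \le \delta^{-1}\|u_{2n}\|^2 + \text{terms}$ — honestly the cleanest route, which I would adopt in the final text, is: for each index $n$ individually, \emph{since $B_{2n-1}$ is a fixed finite matrix}, choose $\varepsilon_n$ small enough that $B_{2n-1} - \varepsilon_n^{-1}(\|A_{2n-1}\|^2+\|A_{2n-2}\|^2)(\text{unit}) $ is still $\ge (c+a)$ on its own — impossible only if $B_{2n-1}$ fails to exceed $c+a$, and the set of such $n$ is finite by (2). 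That is the real content: the exceptional subspace is spanned by the $u_{2n-1}$ with $n$ in the finite set $\{n : B_{2n-1}\not\ge (c+a)I_d\}$ (each contributing $\le d$ dimensions, giving the $d\cdot\cn(c+a)$), together with the defect contribution ($\ind J_{min}$); on everything orthogonal, with the $n$-dependent $\varepsilon_n$, the form is $\ge 0$ on $(c+a)$-level — wait, I need $\varepsilon_n$ independent of the even index being coupled, which it is. I would present this carefully with $\cn(c+a)$ defined to be precisely $\#\{n\ge 1 : B_{2n-1}\text{ does not satisfy } B_{2n-1}\ge(c+a)I_d\}$ (finite by hypothesis (2)), do the completing-the-square on the complement, invoke Weidmann's min-max (\cite[Satz 8.28, 8.29]{we}), and the bound \eqref{e3} follows, with discreteness and accumulation at $+\infty$ as immediate corollaries by letting $a\to\infty$.
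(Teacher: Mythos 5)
Your plan rests on showing that, off a finite-dimensional exceptional subspace, the first-order form $\langle (J-(c+a))u,u\rangle$ is nonnegative and then invoking min-max. This cannot work: nonnegativity of $J-(c+a)$ on a subspace of finite codimension would force the whole spectral subspace of $(-\infty,c+a)$ to be finite-dimensional, i.e.\ the spectrum below $c$ would have to be finite. The theorem claims no such thing, and it is false in the motivating examples (in Example 1 of the paper, $\sigma(J)\cap(-\infty,0]$ is purely absolutely continuous); concretely, a vector supported on a single triple $(u_{2n},u_{2n+1},u_{2n+2})$ with $|u_{2n+1}|$ small and phases chosen so both cross terms are negative makes $\langle(J-c)u,u\rangle<0$ arbitrarily far out, so no finite-codimension positivity of the first-order form is available. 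Your own completing-the-square also shows where the estimate dies quantitatively: Cauchy--Schwarz leaves a remainder $-\varepsilon_n^{-1}\|A\|^2\|u_{2n}\|^2$ on the even side, where $B_{2n}\le cI_d$ supplies nothing positive; to kill it you must take $\varepsilon_n\gtrsim\|A_{2n-1}\|^2+\|A_{2n-2}\|^2$, and then absorbing $\varepsilon_n\|u_{2n-1}\|^2$ into the odd diagonal requires $B_{2n-1}\gtrsim\|A_{2n-1}\|^2+\|A_{2n-2}\|^2$ eventually --- a condition on the off-diagonal entries that the hypotheses do not provide, so your ``bad set'' need not be finite. The final ``cleanest route'' (shrink $\varepsilon_n$ so that $B_{2n-1}-\varepsilon_n^{-1}(\cdots)\ge c+a$, bad set $=\{n:\ B_{2n-1}\not\ge(c+a)I_d\}$) is circular: decreasing $\varepsilon_n$ makes $\varepsilon_n^{-1}$ larger, and it ignores the uncontrolled even-side remainder altogether.

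The missing idea is to localize the variational estimate to the interval by shifting and squaring. After normalizing $c=0$, the paper proves $\langle[(J-aI)^2-a^2I]u,u\rangle\ge0$ for $u\in\ell^2_{d,\emptyset}\cap\cl_N^\perp$ with $N=\cn(2a)$ large; positivity of $(J-aI)^2-a^2I$ on a finite-codimension subspace controls only $\sigma(J)\cap(0,2a)$, since the spectrum below $0$ (and above $2a$) contributes positively to the squared form --- exactly the one-sided statement to be proved. Squaring is also what makes the result independent of $\{A_n\}$: with $T=P_eJ_AP_o$, the square produces the positive terms $\|Tu_o\|^2+\|T^*u_e\|^2$, and these exactly absorb the Cauchy--Schwarz bounds for the two cross terms, leaving $\langle P_o(2aB-4a^2)P_ou_o,u_o\rangle+\langle P_e(-2aB)P_eu_e,u_e\rangle$, which is nonnegative far enough out using hypotheses (1) and (2) alone, with no weights $\varepsilon_n$ and no reference to $\|A_n\|$; min-max then yields $n(J;(0,2a))\le d\cdot\cn(2a)+\ind J_{min}$, and general $c$ follows by applying this to $J-cI$. (Incidentally, $\cn(c+a)$ in the paper is a cut-off index, the maximum of $2N_0$ and $2N(2a)-1$, not the cardinality of the set of bad odd indices as you propose.)
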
 

\begin{proof}
Without loss of generality, we suppose that $c=0$, so the first condition of the theorem reads as $B_{2n}\le 0$ for $n\ge N_0$. Set $B=\mathrm{diag}\, \{B_j\}$ and $J_A=J-B$, that is
\begin{equation}\label{e4}
B=
\begin{bmatrix}
B_1& 0& 0&\ldots\\
0& B_2& 0&\ldots\\
0& 0& B_3&\ddots\\
\vdots& \ddots&\ddots&\ddots
\end{bmatrix}, \quad
J_A=
\begin{bmatrix}
0& A_1& 0&\ldots\\
A_1^*& 0& A_2&\ddots\\
0& 0& A_2^*&\ddots\\
\vdots& \ddots&\ddots&\ddots
\end{bmatrix}.
\end{equation}
For an $a>0$, the block decomposition of the operator $(J-aI)$ with respect to $\ell^2_d=\ldo\oplus\lde$ \ is
$$
(J-aI)=
\begin{bmatrix}
P_o (B-aI) P_o& P_o J_A P_e \\
P_e J_AP_o& P_e (B-aI) P_e \\
\end{bmatrix},
$$
where $P_o J_A P_o=P_e J_A P_e=0$. Therefore, we have 
\begin{eqnarray*}
(J-aI)^2-a^2I
&=&
\lt[
\begin{array}{c}
P_o (B-aI)^2 P_o-a^2 P_o+(P_o J_A P_e)(P_e J_A P_o)\\
(P_e J_A P_o)(P_o (B-aI) P_o)+ (P_e (B-aI) P_e) (P_e J_A P_o)
\end{array}\rt.\\
&&\lt.\begin{array}{c}
(P_o (B-aI)P_o) (P_o J_A P_e)+(P_oJ_A P_e)(P_e (B-aI) P_e)\\
(P_e J_A P_o)(P_o J_A P_e) + P_e (B-aI)^2 P_e - a^2 P_e
\end{array}\rt]
\end{eqnarray*}
To simplify the writing, set $T:=P_e J_A P_o$. The above equality takes the form
\begin{eqnarray}\label{e401}
(J-aI)^2-a^2I&=&
\lt[
\begin{array}{c}
P_o (B^2-2 aB) P_o+ T^*T\\
T (P_o (B-aI) P_o)+ (P_e (B-aI) P_e) T
\end{array}\rt. \\
&&
\lt.
\begin{array}{c}
P_o (B-aI)P_o T^*+ T^* P_e (B-aI) P_e\\
TT^* + P_e (B^2-2aB)^2 P_e
\end{array}\rt] \nn
\end{eqnarray}

Let 
\begin{eqnarray}\label{e5}
\cl_N&=&\{u=\{u_j\}\in \ld: u_j=0_d, \ j>N\}\subset \ld,\\
\cl_N^\perp&=&\{u=\{u_j\}\in \ld: u_j=0_d, \ j\le N\}\subset \ld.\nn
\end{eqnarray}
{\it The first step of the proof} is to show that the quadratic form $(J-aI)^2-a^2 I$ is positive on 
$\ell^2_{d, \emptyset}\cap \cl_N^\perp$ for $N$ large enough. Indeed, writing $u=u_o\oplus u_e, \ u\in \ld, u_o\in\ldo, u_e\in\lde$, we obtain
\begin{eqnarray*}\label{e501}
&&\la [(J-aI)^2-a^2I]u,u\ra=\lt\la  [(J-aI)^2-a^2I]
\begin{bmatrix}
u_o \\
u_e
\end{bmatrix},
\begin{bmatrix}
u_o \\
u_e
\end{bmatrix}\rt\ra \\
&=&\la P_o(B^2-2aB)P_o u_o,u_o\ra+ ||Tu_o||^2
+\la P_e(B^2-2aB)P_e u_e, u_e\ra+||T^*u_e||^2 
\\
&+&2\re \la(P_o(B-aI)P_o) T^*u_e,u_o\ra+2\re \la T^*(P_e(B-aI) P_e) u_e, u_o\ra.
\end{eqnarray*}
Recalling that $\la u_e, Tu_o\ra=\la T^* u_e, u_o\ra$, an elementary transformation yields
\begin{eqnarray}\label{e502}
\dots\ &=&\la P_o(B^2-2aB)P_o u_o,u_o\ra+||Tu_o||^2+||T^*u_e||^2\\
&+&2\rre \la T^*u_e, (P_o(B-2aI)P_o)u_o\ra +
2\rre \la (P_e B P_e) u_e, T u_o\ra \nonumber\\
&+&\la P_e(B^2-2aB)P_e u_e, u_e\ra. \nonumber
\end{eqnarray}
Suppose that $u=u_o\oplus u_e\in \cl_N^\perp$ with $N$ large enough. The choice of $N$ will be made precise below, see \eqref{e6}. By Cauchy-Schwarz inequality and using the fact that $B$ commutes with $P_o, P_e$, we get
\begin{eqnarray}
\dots&\ge&\la P_o(B^2-2aB)P_o u_o,u_o\ra+||Tu_o||^2+||T^*u_e||^2
+\la P_e(B^2-2aB)P_e u_e, u_e\ra \nonumber\\
&&-( ||P_o(B-2aI)P_o)u_o||^2+ ||T^*u_e||^2)-(||(P_e B P_e) u_e||^2+||T u_o||^2)
\nonumber\\
&=&\la P_o(B^2-2aB)P_o u_o,u_o\ra+\la P_e(B^2-2aB)P_e u_e, u_e\ra\nonumber\\
&&-\la  P_o(B-2aI)^2P_o u_o,u_o\ra -\la P_e B^2 P_e u_e, u_e\ra\nonumber\\
&=&\la P_o (2aB-4a^2)P_o u_o,u_o\ra+\la P_e (-2aB) P_e u_e,u_e\ra. \label{e51}
\end{eqnarray}
First, we require that $j\ge N_0$, so $B_{2j}\le 0$ and $\la P_e (-2aB) P_e u_e,u_e\ra\ge 0$ for  $u_e\in\cl_{2N_0}^\perp$. Second, we take $M=2a, a>0$ from definition \eqref{e102-} and, consequently, $B_{2j-1}\ge 2a I_d$ for $j\ge N(2a)$. That is, for 
$u_o\in \cl_{2N(2a)-1}^\perp$, we have $\la P_o (B-2a) P_o u_o,u_o\ra\ge 0$. So, 
$$
\la P_o (2aB-4a^2)P_o u_o,u_o\ra+\la P_e (-2aB) P_e u_e,u_e \ra \ge 2a \la P_o (B-2a) P_o u_o,u_o\ra\ge 0.
$$
To sum up, we define
\begin{equation}\label{e6}
\cn(2a):=\max\{2N_0, 2N(2a)-1\},
\end{equation}
and we see that
$$
\la [(J-aI)^2-a^2I]u,u\ra\ge 0
$$
for $u\in \ell^2_{d,\emptyset}\cap \cl_{\cn(2a)}^\perp$. By the min-max principle for self-adjoint operators \cite[Ch. 4]{birso1}, the total multiplicity of the spectrum $\s(J)$ in the interval $(0,2a)$ does not exceed $d\, \cn(2a)+\ind J_{min}$. Consequently, $\s(J)$ is discrete there, and
\begin{equation}\label{e7}
n(J; (0,2a))\le d\cdot \cn(2a)+\ind J_{min}.
\end{equation}

{\it The second step of the proof} consists in doing the similar computation for an arbitrary $c\in\br$. In fact, if the Jacobi matrix $J$ satisfies the assumptions of the theorem with a given $c$, consider $J\dpp:=J-c$, and apply the calculation from the first step of the proof to this operator.  The $d\times d$ matrices with double-primes refer to the matrix entries of $J\dpp$. Then  $B_{2j}\dpp=B_{2j}-c I_d\le0_d$, $B_{2j-1}\dpp=B_{2j-1}-c I_d$, and $A_j\dpp=A_j$.
Now, apply the result of the above first step to $J\dpp$, and then shift the spectrum $\s(J\dpp)$ by $+c$ to obtain
\begin{equation}\label{e71}
n(J; (c, c+2a))\le d\cdot \cn(c+2a)+\ind J_{min}.
\end{equation}
The theorem is proved.
\end{proof}

Of course, one may rewrite the conclusion of the theorem in a slightly different manner. For instance, we have that $\s(J)\cap (c,+\infty)=\s_d(J)\cap (c,+\infty)=\{\l_n\}$, where the eigenvalues $\l_n$ are  numbered increasingly counting the multiplicities. Then, picking $a>0$ in a way that $c+a=\l_n$ for some fixed $n$, we see that, for $a=\l_n-c$, relation \eqref{e71} becomes 
$$
n=n(J; (c,\l_n))\le d\cdot N(\l_n)+\ind J_{min},
$$
which implies a bound on $\{\l_n\}$ from below.

Here is another corollary of Theorem \ref{t1}.
\begin{proposition}\label{p1}
Let $J$ be a block Jacobi matrix \eqref{e1}.  Suppose that
\begin{equation}\label{e200}
\lim_{j\to\infty} B_{2j-1}=+\infty, \qquad \lim_{j\to\infty} B_{2j}= -\infty.
\end{equation}
Then the spectrum of $J$ is purely discrete and it accumulates to $\pm\infty$ only.
\end{proposition}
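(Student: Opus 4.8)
The plan is to derive Proposition \ref{p1} from Theorem \ref{t1}, applied on an exhausting family of half-lines, together with two one-line variational estimates producing the unboundedness of $\s(J)$ above and below. For \emph{pure discreteness}, fix an arbitrary $c\in\br$. The second relation in \eqref{e200} means that $-B_{2j}\to+\infty$ in the sense of \eqref{e102-}; applying that definition with $M=\max\{-c,0\}$ produces an $N_0=N_0(c)$ with $B_{2j}\le cI_d$ for $j\ge N_0$, which is precisely hypothesis (1) of Theorem \ref{t1} for this value of $c$, while hypothesis (2) is the first relation in \eqref{e200}. Hence Theorem \ref{t1} gives that $\s(J)$ is discrete in $(c,+\infty)$, that is $\s_{\ess}(J)\cap(c,+\infty)=\emptyset$. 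Since $c$ is arbitrary and $\br=\bigcup_{c\in\br}(c,+\infty)$, we conclude $\s_{\ess}(J)=\emptyset$, so $\s(J)$ is purely discrete; as recalled in Subsection \ref{s11}, the conclusion does not depend on the chosen self-adjoint extension.

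It remains to locate the accumulation points of the (now discrete) spectrum. Pick a unit vector $v\in\bc^d$ and, for $k\in\bn$, let $u^{(k)}\in\cdd_{min}$ be the vector whose single nonzero component is $P_{2k-1}u^{(k)}=v$, placed at the odd site $2k-1$. Since $u^{(k)}$ vanishes at every even site, the off-diagonal blocks contribute nothing to $\la Ju^{(k)},u^{(k)}\ra$, so $\la Ju^{(k)},u^{(k)}\ra=\la B_{2k-1}v,v\ra\to+\infty$ by the first limit in \eqref{e200}; thus $J$ is not bounded above on $\cdd(J)$, $\sup\s(J)=+\infty$, and, being purely discrete, $\s(J)$ accumulates at $+\infty$. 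Testing the form on the analogous vectors supported at the single even site $2k$ gives $\la Ju,u\ra=\la B_{2k}v,v\ra\to-\infty$ by the second limit in \eqref{e200}, hence $\inf\s(J)=-\infty$ and $\s(J)$ also accumulates at $-\infty$. A purely discrete spectrum has no finite accumulation point, so the accumulation set of $\s(J)$ in $[-\infty,+\infty]$ is exactly $\{-\infty,+\infty\}$, as claimed. (Alternatively, the $-\infty$ part can be obtained by applying Theorem \ref{t1} to the block Jacobi matrix $-J=J(\{-A_n\},\{-B_n\})$, using the symmetry remark following Theorem A that permutes the roles of the even and odd diagonal blocks, and reflecting the spectrum.)

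I do not expect a genuine obstacle here: the only points that need a moment's care are the passage from ``discrete in $(c,+\infty)$ for each fixed $c$'' to ``discrete on all of $\br$'' --- one must check that hypothesis (1) of Theorem \ref{t1} remains valid for every $c$ as $c\to-\infty$, which holds exactly because $B_{2j}\to-\infty$ --- and the routine verification that the one-site trial vectors lie in $\cdd_{min}\subset\cdd(J)$ and kill the off-diagonal part of the quadratic form.
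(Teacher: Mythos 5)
Your proposal is correct and follows essentially the same route as the paper: apply Theorem \ref{t1} on an exhausting family of half-lines $(c,+\infty)$ (the paper uses $c=-M$ with $M>0$ arbitrary) to get pure discreteness, then test the quadratic form on one-site vectors supported at odd and even sites to show $\la Ju,u\ra\to\pm\infty$, forcing accumulation only at $\pm\infty$. The extra remarks (the $-J$ symmetry alternative and the explicit $\sup/\inf$ bookkeeping) are harmless refinements of the same argument.
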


Of course, relation \eqref{e200} means that for any $M>0$, there is a $N=N(M)$ such that
$B_{2j-1}\ge M I_d$ and $B_{2j}\le -M I_d$ for $j\ge N$.

\begin{proof}
By the hypotheses of the proposition, we have that $B_{2j-1}\to +\infty$ as $j\to+\infty$ and $B_{2j}\le -M I_d$ for some fixed $M>0$ and $j\ge N_0(M)$. So, by Theorem \ref{t1}, the spectrum $\s(J)$ is discrete in $(-M,+\infty)$. Since $M>0$ is arbitrary, the whole $\s(J)$ is discrete.

The spectrum has to accumulate to $\pm \infty$, since
$$
\la B_{2j-1} u_{2j-1}, u_{2j-1}\ra=\la J u, u\ra \to +\infty
$$
for $u_{2j-1}\in (\bc^d)_{2j-1}, ||u_{2j-1}||=1$ and
$$
u=\{u_k\}=\lt\{
\begin{array}{lcl}
0_d&,& k\not=j-1,\\
u_{2j-1}&,& k=2j-1.
\end{array}
\rt.
$$ 
A similar bound for $u_{2j}\in (\bc^d)_{2j}, ||u_{2j}||=1$
$$
\la B_{2j} u_{2j}, u_{2j}\ra=\la J u, u \ra \to -\infty, \quad j\to+\infty,
$$
finishes the proof. Above, 
$$
u=\{u_k\}=\lt\{
\begin{array}{lcl}
0_d&,& k\not=j-1,\\
u_{2j}&,& k=2j.
\end{array}
\rt.
$$ 
\end{proof}

\section{Proof of Theorem \ref{t22}}\label{s2}
First we prove a version of Theorem \ref{t22} for Jacobi matrices with invertible diagonal. Then we show that one can remove this assumption, and, modulo suitable modification,  the result still holds true.

For $B_n\in\cm_{d,d}(\bc), B_n=B_n^*$,  the (elementary) spectral theorem allows us to define $B_{n+}=(B_n)_+:=B_n\, P_{(0,+\infty)}, \ B_{n-}=(B_n)_-:=B_n\, P_{(-\infty, 0]}$, where $P_I$ is the spectral projection of $B_n$ on the interval $I\subset \br$, so that $B_n=B_{n+}+B_{n-}$. 

\begin{theorem}\label{t21} Let $J$ be the block Jacobi matrix defined by \eqref{e1}. Suppose that:
\begin{enumerate}
\item $B$ is invertible, \ie $0\not\in \s(B)$,
\item $B_{2k-1}\to +\infty$ as $k\to +\infty$,
\item $(B_{2k})_+\big |_{\im((B_{2k})_+)}\to+\infty$ as $k\to +\infty$, or $(B_{2k})_+=0$,
\item one has 
\begin{equation}\label{e80}
\limsup_{n\to+\infty}||(B^{-1/2}_{2n})_+ A^*_{2n-1}B^{-1/2}_{2n-1}||+
\limsup_{n\to+\infty}||(B^{-1/2}_{2n-2})_+ A_{2n-2} B^{-1/2}_{2n-1}||<1.
\end{equation}
\end{enumerate}
Then the part of the spectrum $\s(J)\cap(0, +\infty)$ is discrete.
\end{theorem}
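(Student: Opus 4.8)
The plan is to prove, for every fixed $a>0$, that $M_a:=(J-aI)^2-a^2I=J(J-2aI)$ satisfies $M_a\gee0$, \ie $M_a+F\ge0$ for some finite-rank $F=F^*$. Granting this, $\s_\ess(M_a)=\s_\ess(M_a+F)\subset[0,+\infty)$, so by the spectral mapping theorem for essential spectra applied to $p(t)=t^2-2at$ one gets $\s_\ess(J)\subset p^{-1}([0,+\infty))=(-\infty,0]\cup[2a,+\infty)$, whence $\s_\ess(J)\cap(0,2a)=\emptyset$; since $a>0$ is arbitrary, $\s_\ess(J)\cap(0,+\infty)=\emptyset$, which is exactly the discreteness of $\s(J)\cap(0,+\infty)$.

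First I would record the block form of $M_a$ with respect to $\ld=\ldo\oplus\lde$, exactly as in \eqref{e401}: writing $B_o:=P_oBP_o$, $B_e:=P_eBP_e$, $T:=P_eJ_AP_o$ with $J_A:=J-B$ (so $P_oJ_AP_o=P_eJ_AP_e=0$), one has
$$
M_a=\begin{bmatrix}\Ac&\Bc\\ \Bc^*&\Cc\end{bmatrix}\quad\text{on}\quad\ldo\oplus\lde,
$$
with $\Ac=B_o(B_o-2aI)+T^*T$, $\Cc=TT^*+B_e(B_e-2aI)$ and $\Bc=(B_o-aI)T^*+T^*(B_e-aI)$. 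The strategy is to apply the Schur--complement (completing-the-square) reduction underlying the generalized Schur--Frobenius lemma, Lemma \ref{l2}, to the quadratic form of $M_a$ on the finitely supported vectors in $\cl_N^\perp$ with $N$ large (this is legitimate once $\Cc^{-1/2}$ is bounded), reducing the claim to: (i) $\Ac\gee0$; (ii) $\Cc\gee0$ and $\Cc$ invertible in $\bff(\lde)$; (iii) the Schur--complement inequality $\langle(\Ac-\Bc\,\Cc^{-1}\Bc^*)u_o,u_o\rangle\ge0$ for $u_o\in\ldo$ finitely supported in $\cl_N^\perp$, $N$ large. Items (i)--(ii) use hypotheses (1)--(3) only. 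By hypothesis (1), $0\notin\s(B)=\overline{\bigcup_j\s(B_j)}$, so $B_j^2\ge\d^2I_d$ for some $\d>0$. Given $\th\in(0,1)$, hypothesis (2) gives $B_{2k-1}\ge(2a/\th)I_d$ for large $k$, hence $B_o(B_o-2aI)\ge(1-\th)B_o^2$ and $\Ac\ge(1-\th)B_o^2+T^*T$ modulo $\cf$. Arguing on $\im(B_{2k})_+$ — where, by hypothesis (3), eventually $(B_{2k})_+\ge(2a/\th)I$, so $(B_{2k})_+((B_{2k})_+-2aI)\ge(1-\th)(B_{2k})_+^2$ — and on its orthogonal complement — where $B_{2k}\le-\d I$, so $B_{2k}(B_{2k}-2aI)\ge B_{2k}^2$ — separately, one gets $B_e(B_e-2aI)\ge(1-\th)B_e^2$ and $\Cc\ge(1-\th)B_e^2+TT^*$ modulo $\cf$, both right-hand sides bounded below by ${\d'}^2 I$; in particular $\Cc$ is invertible in $\bff(\lde)$.

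Item (iii) is the crux, and the only place hypothesis (4) enters. After a finite-rank correction one may assume $\Ac\ge(1-\th)B_o^2+T^*T$ and $\Cc\ge(1-\th)B_e^2+TT^*$ genuinely; then $\langle(\Ac-\Bc\,\Cc^{-1}\Bc^*)u_o,u_o\rangle=\|\Ac^{1/2}u_o\|^2-\|\Cc^{-1/2}\Bc^*u_o\|^2$, and since $\Ac^{-1/2}$ is bounded it suffices to bound $\|\Cc^{-1/2}\Bc^*\Ac^{-1/2}\|$ by a constant $<1$ modulo a finite-rank perturbation. I would split $\lde=\lde^+\oplus\lde^-$ along the spectral decomposition $B_e=(B_e)_++(B_e)_-$, with $T=T^++T^-$ accordingly, and treat the two halves by different mechanisms. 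On $\lde^+$: conjugating the even block by $(B_{2k}^{-1/2})_+$ and the odd block by $B_{2k-1}^{-1/2}$, using $T^*\Cc^{-1}T\le I$ (valid since $\Cc-TT^*\ge0$) to dispose of the remaining $T$-factors, and absorbing the diagonal $B$-factors (which have norm $1+o(1)$ on these directions), the relevant part of $\Cc^{-1/2}\Bc^*\Ac^{-1/2}$ becomes a block-bidiagonal operator built from the sequences $\{(B_{2n}^{-1/2})_+A_{2n-1}^*B_{2n-1}^{-1/2}\}$ and $\{(B_{2n-2}^{-1/2})_+A_{2n-2}B_{2n-1}^{-1/2}\}$; by hypothesis (4) its essential norm is then at most $\limsup_n\|(B_{2n}^{-1/2})_+A_{2n-1}^*B_{2n-1}^{-1/2}\|+\limsup_n\|(B_{2n-2}^{-1/2})_+A_{2n-2}B_{2n-1}^{-1/2}\|<1$. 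On $\lde^-$, where the weights $(B_{2k}^{-1/2})_+$ vanish and hypothesis (4) says nothing, one argues as in the proof of Theorem \ref{t1}: there $B_{2k}\le0$, so the Cauchy--Schwarz bound of the cross terms against the large diagonal $B_o$ leaves a quadratic form $\ge 2a\langle P_o(B_o-2aI)u_o,u_o\rangle\ge0$ modulo $\cf$. Combining the two — \ie apportioning the off-diagonal pieces $(T^\pm)^*u_e^\pm$ and $T^\pm u_o$ between the two arguments so that neither ``budget'' is overspent — gives $\langle(\Ac-\Bc\,\Cc^{-1}\Bc^*)u_o,u_o\rangle\ge0$ on a subspace of finite codimension, hence $M_a\ge0$ there, and the reduction of the first paragraph concludes the proof.

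I expect the main obstacle to be item (iii): one must run the hypothesis-(4) estimate on the directions where $B_{2k}>0$ and the Theorem \ref{t1} sign estimate on the directions where $B_{2k}\le0$ simultaneously — they share the single odd block $B_o$ and the single coupling $T$ — while keeping the bookkeeping of the unbounded $B$-weights and of the repeated ``modulo $\cf$'' reductions tight enough that the threshold $1$ in hypothesis (4) is exactly what is needed.
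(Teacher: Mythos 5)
Your global skeleton is the paper's: fix $a>0$, show $(J-aI)^2-a^2I\gee 0$, and conclude that $\s(J)\cap(0,2a)$ is discrete; your items (i)--(ii) (positivity of the diagonal blocks modulo $\cf$ and invertibility of the even block in $\bff(\lde)$ from hypotheses (1)--(3)) also match. The divergence — and the genuine gap — is at your item (iii). The paper does \emph{not} feed the full operator with off-diagonal block $\Bc=(B_o-aI)T^*+T^*(B_e-aI)$ into Lemma \ref{l2}. It first spends a Cauchy--Schwarz estimate on the single cross term $2\re\la T^*u_e,(P_o(B-2aI)P_o)u_o\ra$ in \eqref{e9}; this consumes $\|T^*u_e\|^2$ together with the square $(B_o-2a)^2$ and leaves the reduced operator \eqref{e1002}, whose off-diagonal block is just $T^*(P_eBP_e)$. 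For that operator the Schur complement is computed \emph{exactly}, because $B$ commutes with $P_e$: $B_e(B_e^2-2aB_e)^{-1}B_e=I+2a(B_e-2a)^{-1}$, so the $T^*T$ term cancels identically and \eqref{e10} collapses to $2a\big[P_o(B-2a)P_o-T^*(P_e(B-2a)^{-1}P_e)T\big]$. The negative spectral part of $B_e$ is then simply discarded via $(B-2a)^{-1}\le(B-2a)_+^{-1}$ (no second run of the Theorem \ref{t1} sign argument is needed), and conjugation by $(P_oBP_o)^{-1/2}$ as in \eqref{e11}, plus tail estimates, turns the condition into exactly \eqref{e80}, with threshold exactly $1$.

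Your route instead asks for $\|\Cc^{-1/2}\Bc^*\Ac^{-1/2}\|<1$ modulo $\cf$ with the full $\Bc$, and the estimates you indicate cannot deliver it: writing $\Cc^{-1/2}\Bc^*\Ac^{-1/2}=\Cc^{-1/2}T(B_o-aI)\Ac^{-1/2}+\Cc^{-1/2}(B_e-aI)T\Ac^{-1/2}$ and using only $\Ac\ge(1-\th)B_o^2+T^*T$, $\Cc\ge(1-\th)B_e^2+TT^*$, each summand is bounded by roughly $(1-\th)^{-1/2}$ (e.g. $\|\Cc^{-1/2}T\|\le1$ and $\|(B_o-aI)\Ac^{-1/2}\|\le(1-\th)^{-1/2}\sup_{\l\in\s(B_o)}|1-a/\l|$, and symmetrically), so the triangle inequality gives an essential norm near $2$, not below $1$; moreover these bounds do not see hypothesis (4) at all, and any attempt to smuggle it in by redistributing the unbounded weights produces mismatched powers of $B_o,B_e$ (hypothesis (4) carries exactly half-powers on \emph{both} sides, which is what the paper's conjugation by $B_o^{-1/2}$ after the exact cancellation achieves). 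The required cancellation between the two summands is precisely what the preliminary Cauchy--Schwarz absorption plus the exact Schur-complement identity provide; your ``apportioning the budgets'' between the $\lde^{\pm}$ mechanisms is a placeholder for that cancellation rather than an argument, as you yourself flag. So the decisive step is missing; repaired along the paper's lines (absorb the odd-side cross term first, then apply Lemma \ref{l2} with $\ti B=T^*(P_eBP_e)$), the rest of your outline goes through as in the paper.
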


Of course, one can interpret assumption (3) of the theorem exactly as $\lim_{k\to+\infty}$ $\opp{B^{-1}_{2k}}=0$. Similarly to the discussion at the end the proof of Theorem \ref{t1}, one can give an upper bound on the eigenvalue counting function $n(J; I)$ for an interval $I$, see \eqref{e2}.

\begin{proof} As in Theorem \ref{t1}, the idea is to prove the inequality $(J-aI)^2-a^2I\gee 0$ for a parameter $a>0$ going to $+\infty$. This implies that $\s(J)\cap (0,2a)$ is discrete, which will  give immediately the claim of the theorem.

Recall that $T:=P_e J_A P_o$. We continue on with relation \eqref{e502} from Theorem \ref{t1}
\begin{eqnarray}\label{e9}
&&\\
\la [(J-aI)^2-a^2I]u,u\ra&=&
\la P_o(B^2-2aB)P_o u_o,u_o\ra+ ||Tu_o||^2 \nn\\
&+&\la P_e(B^2-2aB)P_e u_e, u_e\ra+||T^*u_e||^2 \nn \\
&+&2\re \la T^*u_e, (P_o(B-2aI)P_o) u_o\ra \nn\\
&+&2\re \la T u_o, (P_e B P_e) u_e \ra, \nn
\nonumber
\end{eqnarray}
where $u=u_o\oplus u_e\in\ld$. By Cauchy-Schwarz inequality, we get
$$
2\re \la T^*u_e, (P_o(B-2aI)P_o) u_o\ra
\ge -\big[ ||T^*u_e||^2+|| (P_o(B-2aI)P_o) u_o||^2\big].
$$
So, we can bound \eqref{e9} from below as
\begin{eqnarray*}
\la [(J-aI)^2-a^2I]u,u\ra&\ge &2a\la P_o(B-2a) P_o u_o, u_o\ra+||Tu_o||^2\\
&+&2\re \la T^*(P_e B P_e) u_e, u_o \ra + \la P_e (B^2-2aB) P_e u_e, u_e\ra.
\end{eqnarray*}
We see that the latter quadratic form is generated by the following operator written in the $2\times 2$-block decomposition
\begin{equation}\label{e1002}
\begin{bmatrix}
\ti A &\ti B \\
\ti B^*& \ti C
\end{bmatrix}
:=
\begin{bmatrix}
2a(P_o (B-2a) P_o)+T^*T & T^*(P_e B P_e) \\
(P_e B P_e) T& P_e(B^2-2aB) P_e 
\end{bmatrix}
: \ldo\oplus\lde\to\ldo\oplus\lde.
\end{equation}
To see the positivity modulo $\cf$ of the above operator, we apply Lemma \ref{l2}. Trivially, we have
$$
\ti A=2a(P_o (B-2a) P_o)+T^*T\gee 0
$$
provided assumption (2) of the theorem. In the same way, we see that
$$
\ti C= P_e(B^2-2aB) P_e\ge P_e(B^2_+-2aB_+) P_e\gse 0,
$$
since, by assumption (3) of the theorem $(B_{2k})_+\to +\infty$ (in the sense that $\min (\s(B_{2k}\cap (0, +\infty))\to +\infty$ as $k\to+\infty$). 

Assumption (1) yields that $B$ is invertible, and so is $B_+$ on $\im(B_+)$. Furthermore, by assumptions (2), (3) the set $\s(B)\cap (0,+\infty)$ is discrete and it accumulates to $+\infty$ only. Without loss of generality, we can take $2a\not\in \s(B)\cap (0,+\infty), a>0$. The spectral mapping theorem says then that $(B^2-2aB)$ is invertible, and so is $P_e(B^2-2aB)P_e$ on $\lde$. The  same applies to $P_e(B^2_+-2aB_+)P_e$.

Now, we have to check that the operator
\begin{eqnarray}\label{e10}
\ti A-\ti B \ti C^{-1}\ti B^*&=&2a (P_o(B-2a)P_o)+T^*T \nn\\
&-&T^*(P_eBP_e)(P_e(B^2-2aB)P_e)^{-1}(P_eBP_e)T\nn\\
&=&2a (P_o(B-2a)P_o +T^*\big[I_e-(P_eB^2P_e)(P_e(B^2-2aB)P_e)^{-1}P_e\big]T
\end{eqnarray}
is positive modulo $\cf$. Using the fact that $B$ is a block-diagonal operator, we have $P_eB=BP_e$, and, after a simple algebraic calculation, \eqref{e10} takes the form
$$
\dots\ =2a (P_o(B-2a)P_o)-2a T^*(P_e(B-2a)^{-1}P_e)T.
$$
Its positvity is equivalent to the positivity of
$$
(P_o(B-2a)P_o)-T^*(P_e(B-2a)^{-1}P_e)T.
$$
Furthermore, we have $(B-2a)=(B-2a)_+\oplus (B-2a)_-$ and $(B-2a)^{-1}=(B-2a)_+^{-1}\oplus (B-2a)_-^{-1}\le (B-2a)_+^{-1}$. So, the required positivity will be proved, if we obtain
\begin{equation}\label{e1001}
P_o(B-2a)P_o-T^*(P_e(B-2a)_+^{-1}P_e) T\gee 0.
\end{equation}
Once again, the invertibility of $B$ yields that $(P_oBP_o)$ is invertible as well on $\ldo$.  Moreover, assumption (2) of the theorem implies that $(P_oBP_o)\gse0$. Let us conjugate relation \eqref{e1001} by $(P_0 BP_0)^{-1/2}$. Recalling $P_oB=BP_o$, this gives
\begin{equation}\label{e11}
P_oB^{-1}(B-2a)P_o-(P_oBP_o)^{-1/2}T^*(P_e(B-2a)_+^{-1}P_e)\, T(P_oBP_o)^{-1/2}\gee 0.
\end{equation}
Consider the operator $(B-2a)_+$ on $\lde$. One has 
$(B-2a)_+=(B_+-2a)_+= B_+(I-2a{B_+}^{-1})_+$, where $B_+^{-1}:=(B^{-1})_+$. Restricting the latter expression on $\cl^\perp_N$ with $N$ large enough, we get $||{B_+}^{-1}||<1/(2a)$, and so $(I-2aB_+^{-1})\ge 0$ on this subspace. Hence 
$$
P_e(B-2a)^{-1}_+P_e=P_e\big (B_+^{-1}(I-2aB_+^{-1})\big )_+^{-1}P_e=P_e B_+^{-1}(I-2aB_+^{-1})^{-1}P_e, 
$$ 
and we rewrite \eqref{e11} as
$$
P_o-2a(P_oB^{-1}P_o) -(P_oBP_o)^{-1/2}T^*(P_e B_+^{-1}(I-2aB^{-1}_+)^{-1}P_e) T(P_oBP_o)^{-1/2}\gee 0.
$$
Note that under conditions (2), (3) of the theorem, for any $\d>0$, we have
$$
B_+^{-1}(I-2aB_+^{-1})^{-1}\lee (1+\d)B_+^{-1}.
$$
Therefore, it is enough to check
\begin{equation}\label{e12}
P_o -(1+\d)\, (P_oBP_o)^{-1/2}T^*(P_e B_+^{-1}P_e)T(P_oBP_o)^{-1/2} -2a(P_oB^{-1}P_o)\gee 0.
\end{equation}
Take $N_1, \ N_1\ge N$ large enough to guarantee that the norm of $2a(P_oB^{-1}P_o)$ is as small as we want and restrict the latter inequality to $\cl^\perp_{N_1}$.  Picking $\d>0$ small enough, we see that inequality \eqref{e12} holds, if
$$
P_o -(P_oBP_o)^{-1/2}T^*(P_e B_+^{-1}P_e)T(P_oBP_o)^{-1/2}\gee \d_1P_o
$$
for some $\d_1>0$. In turn, this relation is true if we require
$$
|| (P_e B_+^{-1/2} P_e)T(P_o B P_o)^{-1/2}||<1,
$$
where we understand that the operator $(P_e B_+^{-1/2} P_e)T(P_o B P_o)^{-1/2}$ is restricted to $\cl^\perp_{N_2}$ for some $N_2\ge N_1$. This is the same as
\begin{equation}\label{e1200}
\limsup_{N_2\to+\infty} || (P_e B_+^{-1/2} P_e)T(P_o B P_o)^{-1/2}\Big|_{\cl^\perp_{N_2}}||<1.
\end{equation}
Properly understanding the inverse of the operator, we have
$$
(P_e B_+^{-1/2} P_e)=(P_e B_+^{-1} P_e)^{1/2}=(P_e B_+ P_e)^{-1/2}.
$$
Denote by $S$ the shift operator acting on $\ell^2_d$, \ie  $S\{u_n\}=\{0, u_1, u_2, \dots\}$ for $u\in\ell^2_d$. Set also $A=\mathrm{diag}\, \{A_j\}$, and so
$$
T=P_eJ_AP_o=P_eSA^*P_o+P_eAS^*P_o,
$$
see the formula preceding \eqref{e401}. Consequently, relation \eqref{e1200} follows from
\begin{eqnarray*}
&&\limsup_{n\to+\infty}||(P_e B_+^{-1/2} P_e) SA^* (P_o BP_o)^{-1/2}\Big|_{\cl^\perp_n}||\\
&&+\limsup_{n\to+\infty}||(P_e B_+^{-1/2} P_e) AS^* (P_o BP_o)^{-1/2}\Big|_{\cl^\perp_n}||<1.
\end{eqnarray*}
Explicitely computing the operators appearing under the norms in the previous relation gives
$$
\limsup_{n\to+\infty} ||(B^{-1/2}_{2n})_+ A^*_{2n-1} B_{2n-1}^{-1/2}||+
\limsup_{n\to+\infty} ||(B^{-1/2}_{2n-2})_+ A_{2n-2} B_{2n-1}^{-1/2}||< 1,
$$
which is the claim of the theorem.
\end{proof}

Now, the point is to obtain an extension of Theorem \ref{t21} getting rid of the technical condition (1) from its formulation. Recall the notation
$$
\opp{B^{-1}}=
\lt\{
\begin{array}{lcl}
(B_+)^{-1}&, & \ \mathrm{on} \ \im (B_+),\\
0_d&,&  \ \mathrm{on} \ \im (B_+)^\perp.
\end{array}
\rt.
$$
introduced in \eqref{e1031}. Recall that, with a slight abuse of the notation,  $\opp{B^{-1/2}}=\opp{B^{-1}} \hspace{-0,7mm}^{1/2}$.

\begin{theorem}[{=Theorem B}]\label{t22} Let $J$ be the Jacobi matrix defined by \eqref{e1}. Assume that:
\begin{enumerate}
\item $\lim_{k\to\infty}B_{2k-1}=+\infty$, 
\item either $(B_{2k})_+\to+\infty$ or $(B_{2k})_+=0$,
\item one has 
\begin{equation}\label{e8}
\limsup_{n\to+\infty} ||\opp{B^{-1/2}_{2n}}A^*_{2n-1} B^{-1/2}_{2n-1}||+
\limsup_{n\to+\infty} ||\opp{B^{-1/2}_{2n-2}}A_{2n-2} B^{-1/2}_{2n-1}||<1.
\end{equation}
\end{enumerate}
Then the part of the spectrum $\s(J)\cap(0, +\infty)$ is discrete.
\end{theorem}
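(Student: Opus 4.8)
The plan is to follow the scheme of the proof of Theorem~\ref{t21}: for a sequence of values $a\to+\infty$ I would show that $(J-aI)^2-a^2I\ge 0$ on $\ell^2_{d,\emptyset}\cap\cl_N^\perp$ for a suitable $N=N(a)$; by the min-max principle this forces $\s(J)\cap(0,2a)$ to be finite, and letting $a\to+\infty$ then gives that $\s(J)\cap(0,+\infty)$ is discrete. The only place where assumption~(1) of Theorem~\ref{t21} was used is the invertibility of $P_e(B^2-2aB)P_e$; once it is dropped, this operator may fail to be invertible, and $0$ may even belong to its essential spectrum. To get around this I would split the even subspace as $\lde=\Hc_0\oplus\Hc_1$ with
\[
\Hc_0=\bigoplus_{k}\kker (B_{2k})_+,\qquad \Hc_1=\bigoplus_{k}\im (B_{2k})_+,
\]
writing $P_{\Hc_0},P_{\Hc_1}$ for the corresponding orthoprojectors. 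On $\Hc_0$ one has $B=B_-\le 0$, while on $\Hc_1$ one has $B=B_+$, and by assumption~(2) the restriction of $(B_{2k})_+$ to its range tends to $+\infty$; hence $(B_+^2-2aB_+)\big|_{\Hc_1}$ is boundedly invertible on $\Hc_1\cap\cl_N^\perp$ for $N=N(a)$ large. As in Theorems~\ref{t1} and~\ref{t21}, the finitely many low-index blocks (where $B_{2k-1}$ or $(B_{2k})_+$ need not be well behaved) cause no trouble, being either carried as a finite-rank error or absorbed by passing to $\cl_N^\perp$.

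Starting from identity~\eqref{e9} and applying Cauchy--Schwarz to the term $2\re\la T^*u_e,(B-2a)u_o\ra$ exactly as in Theorem~\ref{t21} (here $T=P_eJ_AP_o$, and $B_+,B_-$ are the positive and negative parts of the block-diagonal operator $B$), one gets
\[
\la[(J-aI)^2-a^2I]u,u\ra\ \ge\ 2a\la P_o(B-2a)P_ou_o,u_o\ra+\|Tu_o\|^2+\la P_e(B^2-2aB)P_eu_e,u_e\ra+2\re\la Tu_o,Bu_e\ra .
\]
I would then write $u_e=u_e^{(0)}\oplus u_e^{(1)}$ along $\Hc_0\oplus\Hc_1$ and split $\|Tu_o\|^2=\|P_{\Hc_0}Tu_o\|^2+\|P_{\Hc_1}Tu_o\|^2$. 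Since $B$ and $B^2-2aB$ preserve each of $\Hc_0,\Hc_1$ and $B=B_-$ on $\Hc_0$, the part of the right-hand side living in $\Hc_0$ collapses to a perfect square,
\[
\|P_{\Hc_0}Tu_o\|^2+\la(B_-^2-2aB_-)u_e^{(0)},u_e^{(0)}\ra+2\re\la P_{\Hc_0}Tu_o,B_-u_e^{(0)}\ra\ \ge\ \|P_{\Hc_0}Tu_o+B_-u_e^{(0)}\|^2\ \ge\ 0,
\]
where I used $\la(B_-^2-2aB_-)u_e^{(0)},u_e^{(0)}\ra\ge\|B_-u_e^{(0)}\|^2$, valid because $-2aB_-\ge 0$. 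So the whole ``defective'' subspace $\Hc_0$ contributes a nonnegative amount, consuming only $\|P_{\Hc_0}Tu_o\|^2$ of the available $\|Tu_o\|^2$.

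What is left is to prove
\[
2a\la P_o(B-2a)P_ou_o,u_o\ra+\|P_{\Hc_1}Tu_o\|^2+\la(B_+^2-2aB_+)u_e^{(1)},u_e^{(1)}\ra+2\re\la P_{\Hc_1}Tu_o,B_+u_e^{(1)}\ra\ \ge\ 0,
\]
\ie that the block operator
\[
\begin{bmatrix}\ 2a\,P_o(B-2a)P_o+T_1^*T_1 & T_1^*B_+\ \\[1mm]\ B_+T_1 & (B_+^2-2aB_+)\big|_{\Hc_1}\ \end{bmatrix}:\ \ldo\oplus\Hc_1\ \longrightarrow\ \ldo\oplus\Hc_1,\qquad T_1:=P_{\Hc_1}T,
\]
is $\ge 0$ on $(\ell^2_{d,\emptyset}\cap\cl_N^\perp)\oplus(\Hc_1\cap\cl_N^\perp)$. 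But this is exactly the configuration treated in the proof of Theorem~\ref{t21}, with $(B_{2k})_+$ restricted to its range now playing the role of the invertible diagonal there: the lower-right corner is boundedly invertible on $\cl_N^\perp$, so the Schur--Frobenius computation (cf.\ Lemmas~\ref{l1} and~\ref{l2}) goes through at the level of quadratic forms; computing the Schur complement as before reduces the problem to inequality~\eqref{e1001}, and then conjugating by $(P_oBP_o)^{-1/2}$ and following \eqref{e11}--\eqref{e1200} verbatim --- with the even-side weight $\opp{B^{-1/2}}$ and $T$ unfolded as at the end of that proof --- turns the required norm bound into condition~(3) of the present theorem. Letting $a\to+\infty$ finishes the argument.

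The step I expect to be the delicate one is the first observation above: that the ``bad'' subspace $\Hc_0$ can be removed by completing a square rather than by inverting a block. One cannot simply run the Schur-complement/Lemma~\ref{l2} machinery on all of $\lde$, because once assumption~(1) is dropped $P_e(B^2-2aB)P_e$ need not be invertible --- its kernel $\bigoplus_k\kker B_{2k}$ may be infinite-dimensional and the negative eigenvalues of the $B_{2k}$ may accumulate at $0$ --- so the Calkin-algebra version of the Schur--Frobenius lemma does not apply to it. Isolating the part $\Hc_1$ on which $B_+$ actually grows is what restores the hypotheses of Theorem~\ref{t21}; everything after that is a routine repetition of that proof, including the bookkeeping of the finitely many initial blocks.
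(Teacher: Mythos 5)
Your argument is correct and follows the paper's global scheme (the lower bound for the form of $(J-aI)^2-a^2I$, the generalized Schur--Frobenius Lemma \ref{l2}, and the reduction of the Schur complement to the norm condition \eqref{e8}), but the way you remove the invertibility hypothesis on $B$ is genuinely different from the paper's. The paper keeps the whole form \eqref{e1002} and restricts the even component to $\ce=\oplus_k \im B_{2k}$, i.e.\ it only discards $\cg=\oplus_k\kker B_{2k}$, on which $B$ vanishes identically, so that the discarded part decouples trivially (the inequality \eqref{e20} and $T^*T\ge T^*P_\ce T$); it then reruns the proof of Theorem \ref{t21} with $T$ replaced by $P_\ce T$. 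You instead split $\lde$ along the \emph{positive} spectral subspaces, $\Hc_1=\oplus_k\im (B_{2k})_+$ and $\Hc_0=\oplus_k\kker (B_{2k})_+$, which is a strictly finer surgery: on $\Hc_0$ the operator $B=B_-$ does not vanish, so the cross terms do not disappear by themselves, and you need your extra ingredient --- the completion of the square using $B_-^2-2aB_-\ge B_-^2$ --- to show that the $\Hc_0$-part of the form is nonnegative outright. What this buys you is that the entire negative part of $B$ is eliminated before the Schur-complement step, so the lower-right block $(B_+^2-2aB_+)\big|_{\Hc_1}$ is $\gse 0$ and invertible modulo finite rank directly from assumption (2), with no implicit requirement that the negative spectra of the $B_{2k}$ stay away from $0$; in the paper's reduction the block $P_\ce(B^2-2aB)P_\ce$ still carries the negative spectral subspaces of the $B_{2k}$, and its Fredholm invertibility (needed for Lemma \ref{l2}) is less immediate when those negative eigenvalues accumulate at $0$ --- precisely the point you flag in your closing remark. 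So your route is a legitimate, and arguably more robust, variant; the remaining steps (Schur complement, conjugation by $(P_oBP_o)^{-1/2}$, unfolding $T=P_eSA^*P_o+P_eAS^*P_o$ to arrive at \eqref{e8}, and absorbing finitely many initial blocks into $\cl_N^\perp$ or a finite-rank perturbation) are indeed verbatim as in Theorem \ref{t21}, with $\opp{B^{-1}}$ playing the role of $B_+^{-1}$.
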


Compared to Theorem \ref{t21}, we drop the condition of the invertibility of $B$. The price to pay is that the expressions $(B_{2k}^{-1})_+$ in \eqref{e80} become $\opp{B_{2k}^{-1}}$ in \eqref{e8}.
Notice that condition (2) of the theorem can be read as $\lim_{k\to+\infty} \opp{B^{-1}_{2k}}=0$.
.

\begin{proof} The proof of the theorem follows exactly the lines of Theorem \ref{t21}. The only thing to do is to explain  how one gets rid of the invertibility of $B$ in Theorem \ref{t21}.

The problem is that to apply the Schur-Frobenius lemma to \eqref{e1002}, we need to have an invertible $(P_e B_+P_e)$. To this end, recall that $\ld=\ldo\oplus\lde$,
and consider $E_{2k}=\im B_{2k}\subset (\bc^d)_{2k}$. Of course, since $B_{2k}$ is self-adjoint, $\im B_{2k}=(\bc^d)_{2k}\ominus \kker B_{2k}$. Set 
$$
\ce=\oplus_{k: E_{2k}\not =\{0\}} E_{2k}\subset \lde,
$$
and $\cg=\lde\ominus\ce$; the corresponding orthoprojectors are denoted by $P_\ce$ and $P_\cg$, respectively. It follows from the construction that $P_e B_+ P_e: \ce\to \ce$ is an invertible operator.

Now, let us turn back to the block representation of the quadratic form \eqref{e1002}, \ie 
\begin{equation}\label{e19}
\begin{bmatrix}
2a(P_o (B-2a) P_o)+T^*T & T^*(P_e B P_e) \\
(P_e B P_e) T& P_e(B^2-2aB) P_e 
\end{bmatrix}.
\end{equation}
Consider its restriction on the space $\ell^2_o\oplus\ce$; it is given by
\begin{eqnarray}\label{e20}
&&\begin{bmatrix}
2a(P_o (B-2a) P_o)+T^*T & T^*(P_e B P_e) P_\ce\\
P_\ce (P_e B P_e) T& P_\ce(B^2-2aB) P_\ce 
\end{bmatrix}\nonumber\\
&\ge&
\begin{bmatrix}
2a(P_o (B-2a) P_o)+T^*P_\ce T & T^*P_\ce (P_e B P_e) P_\ce\\
P_\ce (P_e B P_e) P_\ce T& P_\ce(B^2-2aB) P_\ce 
\end{bmatrix},
\end{eqnarray}
and the positivity of the latter form easily yields the positivity of quadratic form \eqref{e19}, since $T^*T=T^*(P_\ce+P_\cg) T\ge T^*P_\ce T$. We observe that the proof of Theorem \ref{t21} goes through for the form in the RHS of \eqref{e20} (\ie, with $T$ replaced with $P_\ce T$).  Hence, we obtain the discreteness of $\s(J)$ on the interval $(0,2a)$, and the proof of the theorem is finished.
\end{proof}
Condition (2) in Theorem \ref{t22} says that the sequence $\{B_{2k}\}$ is allowed to contain two subsequences with rather different behavior, \emph{i.e.} one can have $(B_{2k})_+\to+\infty$ for the first subsequence and $(B_{2k})_+=0$ for the second subsequence complementary to the first one. 

It goes without saying that, as in Theorem \ref{t1},  one can easily write down the ``shifted'' versions of Theorems \ref{t21}, \ref{t22}.

\section{Some examples to Theorems \ref{t1}, \ref{t21}, \ref{t22}}\label{s3}
In this section we give a series of examples discussing the phenomena described in Theorems \ref{t1}-\ref{t22}. Besides that, the examples show that the results obtained in these theorems are sharp, and they give negative answers to some natural attempts of their generalization. Most examples hold even in the scalar-valued case $d=1$.

\subsection{Example 0}\label{s301}
The following easy fact was mentioned in the introduction, and now we give a sketch of its proof for the completeness. Let
$$
J_\a=J(\{n^\a\},\{0\}),\quad J_{\a,\b}=J(\{n^\a\},\{n^\b\}),
$$
where $0<\a\le 1,\ 0<\b<\a$, see \eqref{e101-} for notation. Set also $B=\mathrm{diag}\ \{n^\b\}$. Then $J_{\a,\b}=J_\a +B$ is neither relatively compact nor relatively bounded perturbation of $J_\a$.

Observe that, due to Carleman condition \cite[Ch. 1, Pb. 1]{akh}, the Jacobi matrix $J_\a$ is in the limit point case, and so $\cdd(J_\a)=\cdd((J_\a)_{max})$, where $J_\a$ stays for the self-adjoint extension of $J_\a$ for the simplicity of notation.

We have to present an $u=\{u_n\}\in \ell^2$ such that $J_\a u\in \ell^2$, but $Bu\not\in\ell^2$. Set $u_n:=(i^n)/n^x, \ u:=\{u_n\}$. The condition $u\in\ell^2$ implies $x>1/2$. Then
\begin{eqnarray*}
(J_\a u)_n&=&(n-1)^\a\frac{i^{n-1}}{(n-1)^x}+n^\a\frac{i^{n+1}}{(n+1)^x}\\
&=&i^{n-1}n^{\a-x}\lb\frac{2x-\a}{n}+O\lp\frac 1{n^2}\rp\rb.
\end{eqnarray*}
Assuming $2x\not=\a$, the vector $J_\a u$ is in $\ell^2$ whenever $x>\a+1/2$, and this is true since $x>1/2$. Now, $Bu=\lt\{\frac{i^n}{n^{x-\b}}\rt\}$ is not in $\ell^2$ if $x\le \b+1/2$. So, for an arbitrary small  $\b>0$, one can choose a value $x$ so that $1/2<x\le 1/2+\b$, and the vector $u$ satisfies all requirements. Therefore, $B$ is not a relatively compact perturbation of $J_\a$, and it does change dramatically its essential spectrum. 

\subsection{Example 1}\label{s31} Consider the scalar-valued Jacobi matrix
$J=J(\{a_n\},\{b_n\})$ where
$$
a_n=n^\a, \quad
b_n=
\lt\{
\begin{array}{lcl}
bn^\a&,& n=2k-1,\\
0&,& n=2k,
\end{array}
\rt.
$$
where $2/3\le\a<1$ and $b>0$, and so $c=0$, see Theorem \ref{t1}. On the one hand, Damanik-Naboko \cite[Thm. 3]{dana} say that $\s(J)\cap (-\infty, 0]$ is purely absolutely continuous. On the other hand, Theorem \ref{t1} claims that $\s(J)\cap (0,+\infty)$ is purely discrete with only accumulation point at $+\infty$. Notice that the example shows that one cannot extend the presence of the discrete spectrum of $J$ to the interval $(-\ep,+\infty)$ for any $\ep>0$, and hence the value $c$ in Theorem \ref{t1} is sharp.

The above example also shows that Theorem \ref{t22} is sharp as well. Indeed, we have in terms of Theorem \ref{t22} that $B_{2n-1}=b_{2n-1}=b(2n-1)^\a\to+\infty$, and $B_{2n}=b_{2n}=0$. Hence $(B_{2n})_+=0$ and $\opp{B^{-1}_{2n}}=0$. As above, we have $\s(J)\cap(-\infty, 0]$ is purely absolutely continuous, and, by Theorem \ref{t22}, $\s(J)\cap (0,+\infty)$ is purely absolutely discrete.

\subsection{An heuristic example}\label{s32} A very special feature of Theorem \ref{t1} is that the result depends on the behavior of entries $B_{2n-1}, B_{2n}$ of the matrix $J$, but not on $A_n$ which can be arbitrary. A simple example presented below explains how this phenomenon can occur; it is of heuristic nature and it cannot be considered as a rigorous proof. Since the following discussion makes quite apparent that the effect comes from a fine cancellation  of ``$A_n$-terms'' in certain asymptotics, we decided to include it in the paper.

Consider  the scalar-valued Jacobi matrix $J=J(\{a_n\},\{b_n\})$ with entries
$$
a_n=n^\a, \quad 
b_n=
\lt\{
\begin{array}{lcl}
n^\b&,& n=2k-1,\\
0&,& n=2k,
\end{array}
\right.
$$
where $1/2<\a<1, \ 0<\b<\a$, and $\a>\frac 12\max\{1+\b, 2-\b\}$. The last relation implies in fact that $3/4\le\a<1$. Denote by $\pp_n$ the standard transfer matrix for $J$, \ie
$$
\begin{bmatrix}
u_n\\ u_{n+1}
\end{bmatrix}
=\pp_n
\begin{bmatrix}
u_{n-1}\\ u_n
\end{bmatrix},
$$
where $(J u^\pm)’(\l)=\l (u^\pm)’(\l)$, and $u^\pm:=u^\pm(\l)=\{u^\pm_n\}$, see Subsection \ref{s12} for the prime-notation.

We have for 2-step transfer matrices
\begin{eqnarray}\label{e14}
\pps_n&:=&\pp_{2n}\pp_{2n-1} \nn\\
&=&
\begin{bmatrix}
0& 1\\
-\frac{(2n-1)^\a}{(2n)^\a}&\frac\l{(2n)^\a} 
\end{bmatrix}\,
\begin{bmatrix}
0& 1\\
-\frac{(2n-2)^\a}{(2n-1)^\a}&\frac{\l-b_{2n-1}}{(2n-1)^\a} 
\end{bmatrix}\nn \\
&=&
\begin{bmatrix}
-\frac{(2n-2)^\a}{(2n-1)^\a}& \frac{\l-b_{2n-1}}{(2n-1)^\a}\\
-\frac{\l(2n-2)^\a}{(2n)^\a(2n-1)^\a}&-\frac{(2n-1)^\a}{(2n)^\a}+\frac{\l(\l-b_{2n-1})}{(2n)^\a(2n-1)^\a}
\end{bmatrix}\nn\\
&=&-I_2+
\begin{bmatrix}
\frac\a{2n}&\frac\l{(2n)^\a}-\frac{b_{2n-1}}{(2n-1)^\a} \\
-\frac\l{(2n)^\a}& \frac\a{2n}-\frac{\l b_{2n-1}}{(2n)^{2\a}}
\end{bmatrix}
+R_n,\nn
\end{eqnarray}
where $\{R_n\}\in \ell^1(\cm_{d,d}(\bc))$ and we used the Taylor expansion of elementary functions appearing in the preceding expression. To calculate the eigenvalues $\l_\pm(n)$ of $\pps_n$, we notice that
\begin{eqnarray*}
\det \pps_n&=&\det\pp_{2n}\det\pp_{2n-1}=\lp\frac{2n-2}{2n}\rp^\a=\lp 1-\frac1n\rp^\a,\\
\tr \pps_n&=&-2+\frac\a n-\frac{\l b_{2n-1}}{(2n)^{2\a}}+O\lp\frac1{n^{2\a}}\rp,
\end{eqnarray*}
and, consequently,
\begin{eqnarray}\label{e1004}
\l_\pm(n)&=&\frac12\tr\pps_n\pm\sqrt{\frac14(\tr\pps_n)^2-\det\pps_n} \nn\\
&=&-1+\frac\a{2n}-\frac{\l b_{2n-1}}{2(2n)^{2\a}}+O\lp\frac1{n^{2\a}}\rp \nn\\
&\pm&\sqrt{\lp-1+\frac\a{2n}-\frac{\l b_{2n-1}}{2(2n)^{2\a}}+O\lp\frac1{n^{2\a}}\rp\rp^2-\lp 1-\frac1n\rp^\a}\\
&=&-1+\frac\a{2n}-\frac{\l b_{2n-1}}{2(2n)^{2\a}}+O\lp\frac1{n^{2\a}}\rp \nn\\
&\pm&\frac{\sqrt{\l b_{2n-1}}}{(2n)^\a}+O\lp\frac 1{\sqrt{b_{2n-1}}n^\a}\rp. \nn
\end{eqnarray}
Hence we obtain
$$
\l_\pm(n)=-1+\frac\a{2n}\pm\frac{\sqrt{\l b_{2n-1}}}{(2n)^\a}+r_n,
$$
where
$$
r_n=-\frac{\l b_{2n-1}}{2(2n)^{2\a}}+O\lp\frac1{n^{\a+\b/2}}\rp.
$$
Take $\l>0$. The condition $2\a>\max\{1+\b, 2-\b\}$ yields that $\{r_n\}\in \ell^1$. Recall that $u^\pm:=u^\pm(\l)=\{u^\pm_n\}$ is a generalized eigenvector for the equation $(Ju^\pm)’(\l)=\l (u^\pm)’(\l)$. A Levinson type result in this situation, \emph{e.g.}, see Theorem \ref{t02}, along with a simple computation would imply formally that
\begin{eqnarray*}
|u^-_{2N}|&=& (1+\bar o(1))\, \prod^N_{n=1} |\l_-(n)|
=(1+\bar o(1))\, \prod^N_{n=1} \lt| -1+\frac\a{2n}-\frac{\l^{1/2}}{(2n)^{\a-\b/2}}\rt|\\
&=& O\lp\frac1{n^{\a/2}}\exp\lb-\frac{\sqrt{\l}}{1-\a+\b/2}\, \lp\frac n2\rp^{1-\a+\b/2}\rb\rp.
\end{eqnarray*}
A similar asymptotic bound holds for odd-numbered components $u^-_{2N-1}$, and hence we get $u^-\in \ell^2$, which suggests  that $\l\in \s_d(J)$ whenever $u(\l)$ satisfies the appropriate boundary conditions, see Theorem \ref{t01}. On the contrary, taking $\l<0$ gives rise to bounded oscillating solutions $u^\pm=\{u^\pm_n\},\ (Ju^\pm)’(\l)=\l (u^\pm)’(\l)$, where, granting a Levinson-type result (Theorem \ref{t02}) again,
\begin{eqnarray*}
u^\pm_{2N}&=&(1+\bar o(1))\, \prod^N_{n=1} \l_\pm(n)
= (1+\bar o(1))\, \prod^N_{n=1} \lp-1+\frac\a{2n}\pm\frac{i|\l|^{1/2}}{(2n)^{\a-\b/2}}\rp\\
&=& O\lp\frac1{n^{\a/2}}\exp\lb \pm\frac{i\sqrt{|\l|}}{1-\a+\b/2}\, \lp\frac n2\rp^{1-\a+\b/2}\rb\rp.
\end{eqnarray*}
By subordinacy theory for eigenvectors of Jacobi matrices (Theorem \ref{t01}),  this suggests that $\l\in \s_{ac}(J)$.

The issue here is that a Levinson-type theorem is not fully applicable in this situation. Still, the point is that one clearly sees the cancellation of leading terms depending on $a_n$ (\ie $\a$-power terms) under the square root \eqref{e1004}, thus bringing into “the main game” a term depending on $b_{2n-1}$ producing either rapidly decreasing solutions $u_-(\l)$ for $\l>0$, or bounded oscillating solutions $u^\pm(\l)$ for $\l<0$. Theorem \ref{t1} gives a rigorous proof for this phenomenon in a completely general case.

Concluding this example, we remark that it is likely that a more rigorous analysis in this direction might be performed with the help of methods introduced in Naboko-Simonov \cite{nasi}.

\subsection{A counterexample for a counterpart of Theorem \ref{t1} “with step 3”}\label{s33}
Let $J=J(\{A_n\},\{B_n\})$ be a block Jacobi matrix. Theorem \ref{t1} says that, if the entries $B_{2n}$ are uniformly bounded and $B_{2n-1}\to +\infty$ as $n\to+\infty$, then the spectrum $\s(J)$ is discrete on a right half-axis. This claim prompts to a guess that the same effect is likely to take place if we extend the “2-step” formulation to a larger number of “steps”. For instance, we wish to understand if the conditions
$$
\sup_{n\not =n_k} B_n\le c_0,\quad B_{n_k}\to+\infty, \ k\to+\infty
$$
for a “reasonable” sequence $\{n_k\}\subset \bn, \ n_k\nearrow +\infty$, ensure that $\s(J)$ is discrete on a right real half-axis. 

It turns out that the answer to this guess is negative. Moreover, even the “3-step” assumptions do not give the discreteness conclusion in full generality, as the below proposition shows. Hence, the open problem is to give a version of “n-step” result similar to Theorem \ref{t1} under some additional assumptions.

\begin{proposition}\label{p3} There is a scalar-valued Jacobi matrix $J=J(\{a_n\},\{b_n\})$ such that
$$
b_{3n}\to+\infty, \quad b_{3n-1}=b_{3n-2}=0,
$$
and, for a suitable choice of $\{a_n\}$, the spectrum $\s(J)$ is purely absolutely continuous and $\s(J)=\s(J_{ac})=\br$.
\end{proposition}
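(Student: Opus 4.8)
The plan is to construct the Jacobi matrix by a block-perturbation argument, exploiting the ``step 3'' freedom to hide the unbounded entries $b_{3n}$ behind enormous off-diagonal weights that force all generalized eigenvectors at every energy $\l\in\br$ to oscillate. The guiding idea: since Theorem \ref{t1} uses the weights $A_n$ only to absorb a cross term via Cauchy--Schwarz, the sharpness at ``step 3'' must come from a genuine cancellation. So I would look for $\{a_n\}$ chosen so that the three-step transfer matrix $\pps_n:=\pp_{3n}\pp_{3n-1}\pp_{3n-2}$ has, after normalization, trace and determinant whose leading asymptotics are \emph{independent of $\l$ and of the size of $b_{3n}$}, with the $b_{3n}$-dependence and the $\l$-dependence both pushed into an $\ell^1$ remainder. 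Concretely, take $a_{3n-2}$ and $a_{3n}$ comparable to a slowly growing sequence $w_n\to\infty$, and take $a_{3n-1}$ to grow like $b_{3n}\cdot w_n$ (fast), so that in the product $\pp_{3n}\pp_{3n-1}$ the entry $b_{3n}/a_{3n-1}$ is small; the remaining ratios $a_{3n-2}/a_{3n-1}$, $a_{3n}/a_{3n-1}$ are also small, which collapses $\pps_n$ to $\pm I_2$ plus a bounded-variation, $\ell^1$-error matrix whose spectrum lies on the unit circle uniformly in $\l$ on compact sets. One must of course retain the limit-point property; this is guaranteed by keeping $\sum 1/a_n=\infty$ along the fast subsequence failing, so instead one arranges $a_{3n-1}$ to grow at most polynomially and checks Carleman's condition \cite[Ch. 1, Pb. 1]{akh} directly.

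The second step is to invoke the Levinson-type result, Theorem \ref{t02}, applied to the three-step transfer cocycle. Write $\pps_n=V_n+R_n$ with $\{R_n\}\in\ell^1$, $\{V_n\}$ of bounded variation, $V_\infty=\lim V_n$ having two distinct eigenvalues $\l_1(\infty),\l_2(\infty)$ on the unit circle with $|\l_1(\infty)|=|\l_2(\infty)|=1$. Since the hypothesis $|\l_j(\infty)|\ne|\l_k(\infty)|$ of Theorem \ref{t02} \emph{fails} for two unimodular eigenvalues, I cannot quote it verbatim; instead I would use a uniform diagonalization / discrete Levinson argument in the double-eigenvalue-modulus (but distinct-eigenvalue) case, for which the $\ell^1$ + bounded-variation hypotheses still yield a basis of solutions $x_n^\pm = \bigl(\prod_{k=n_0}^{n-1}\l_\pm(k)\bigr)(v^\pm(\infty)+\bar o(1))$ with $|\prod \l_\pm(k)|$ bounded above and below. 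This is exactly the mechanism behind the ``bounded oscillating solutions'' heuristic in Subsection \ref{s32}, now made rigorous because the $b_{3n}$ and $\l$ terms genuinely sit in the $\ell^1$ error rather than in the leading order. Conclude that for every $\l\in\br$ \emph{no} solution of $(Ju)'(\l)=\l u'(\l)$ is subordinate: both $x_n^+$ and $x_n^-$ have comparable growth. By the Khan--Pearson theorem, Theorem \ref{t01}, the set $\cm_{ac}$ equals all of $\br$ and $\cm_s=\cm_{sc}=\cm_p=\emptyset$, so $\s(J)=\s_{ac}(J)=\br$.

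The main obstacle, and where most of the work will go, is the transfer-matrix asymptotics in the presence of the unbounded entries $b_{3n}$: one must verify, with explicit constants, that the choice of $\{a_n\}$ really does push \emph{all} of the $b_{3n}$-dependence and $\l$-dependence into an $\ell^1(\br)$-in-$n$, bounded-variation decomposition, uniformly for $\l$ in compact subsets of $\br$ (uniformity is needed to get a.c.\ spectrum on all of $\br$ rather than merely absence of point spectrum at each fixed $\l$). This requires balancing three growth rates ($a_{3n-2},a_{3n-1},a_{3n}$) against the prescribed $b_{3n}\to\infty$ so that, simultaneously: the product $\pp_{3n}\pp_{3n-1}\pp_{3n-2}$ has the right normalized limit $\pm I_2$; the error is summable; Carleman's condition holds so we stay in the limit point case; and the boundary-condition bookkeeping in the definition of $\cm_p$ (Theorem \ref{t01}) shows no eigenvalue slips in. A secondary technical point is supplying the double-modulus discrete Levinson lemma, but that is standard (it is implicit in Benzaid--Lutz \cite{belu} and Elaydi \cite{elay}) and can be cited. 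Once the growth rates are pinned down, the rest is bookkeeping.
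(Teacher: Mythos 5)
Your overall strategy (make the three-step transfer matrices $\pps_n=\pp_{3n}\pp_{3n-1}\pp_{3n-2}$ asymptotically elliptic and then rule out subordinate solutions via Theorem \ref{t01}) is the right one, but the specific construction you propose breaks down at its central step. If you take $a_{3n-2}\sim a_{3n}\sim w_n$ and $a_{3n-1}\sim b_{3n}w_n\gg w_n$, the matrix $\pps_n$ does \emph{not} collapse to $\pm I_2$ plus an $\ell^1$ error. Indeed $\det\pps_n=a_{3n-3}/a_{3n}\to 1$, while the $(1,2)$ and $(1,1)$ entries of $\pps_n$ are $O(a_{3n-2}/a_{3n-1})\to 0$; consistency of the determinant then forces other entries to blow up — explicitly, the $(2,1)$ entry contains $\frac{a_{3n-1}a_{3n-3}}{a_{3n}a_{3n-2}}\sim \frac{b_{3n}w_{n-1}}{w_n}\to\infty$, and the trace contains the term $-\frac{\l\, a_{3n-1}}{a_{3n}a_{3n-2}}\sim-\frac{\l\, b_{3n}}{w_n}$, so the $\l$- and $b_{3n}$-dependence sits in the \emph{leading} order, not in an $\ell^1$ remainder. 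Hence there is no decomposition $\pps_n=V_n+R_n$ with $\{V_n\}$ of bounded variation converging to an invertible limit and $\{R_n\}\in\ell^1$, and the whole second step collapses. There is also an internal inconsistency: a limit $\pm I_2$ has a \emph{double} eigenvalue, which contradicts your claim of "two distinct unimodular eigenvalues"; that would be precisely the critical double-root case, where the distinct-eigenvalue Levinson theorems (Benzaid--Lutz, Elaydi, Janas--Moszy\'nski) do not apply, subordinate solutions can appear, and the analysis requires the much finer machinery of \cite{jana5} — it cannot be dismissed as a citable standard lemma.

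The paper's proof uses a different, and in a sense opposite, balance: it keeps all weights comparable, $a_n=n^\a$ with $1/2<\a<1$, and couples the sparse diagonal at the \emph{same} rate, $b_{3n}=\d(3n)^\a$ with subcritical $0<\d<2$. Then $\det\pps_n\to1$, $\tr\pps_n\to\d\in(0,2)$, and $\pps_n\to\begin{bmatrix}0&-1\\1&\d\end{bmatrix}$, an elliptic limit with two \emph{distinct} complex-conjugate unimodular eigenvalues $\d/2\pm i\sqrt{1-\d^2/4}$; the $\l$-dependence only enters at order $n^{-\a}$ and the error terms are genuinely $\ell^1$ (this is where $\a>1/2$ is used). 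The generalized Janas--Moszy\'nski theorem then gives two basis solutions with $|\l_\pm(n)|=(1-1/n)^{\a/2}$ of equal modulus for every $\l\in\br$, so no subordinate solution exists and Khan--Pearson yields $\s(J)=\s_{ac}(J)=\br$. If you want to salvage your approach, you would need to redesign the weights so that the normalized $\pps_n$ converges in bounded variation to a fixed elliptic matrix with distinct eigenvalues and with the $b_{3n}$- and $\l$-terms relegated to a summable error — which, after the dust settles, essentially forces a choice of the paper's type rather than the "enormous $a_{3n-1}$" regime.
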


\begin{proof} We choose
$$
a_n=n^\a, \quad b_{3n}=\d(3n)^\a,
$$
with $1/2<\a<1$ and $0<\d<2$. Notice that for $\d=2$, this choice corresponds to the critical coupling of the main diagonal $\{b_n\}$ and auxiliary diagonals $\{a_n\}$, that is
$$
a_{n-1}+a_n-b_n=\bar o(n^\a),
$$
see \cite{jana1, jana2}. We have for 3-step transfer matrices
\begin{eqnarray*}
\pps_n&:=&\pp_{3n}\pp_{3n-1}\pp_{3n-2}\\
&=&
\begin{bmatrix}
0& 1\\
-\frac{(3n-1)^\a}{(3n)^\a}& \frac{\l-b_{3n}}{(3n)^\a}
\end{bmatrix}
\,
\begin{bmatrix}
0& 1\\
-\frac{(3n-2)^\a}{(3n-1)^\a}& \frac\l{(3n-1)^\a}\\
\end{bmatrix}
\, 
\begin{bmatrix}
0& 1\\
-\frac{(3n-3)^\a}{(3n-2)^\a}& \frac\l{(3n-2)^\a}\\
\end{bmatrix}\\
&=&
\lt[
\begin{array}{l}
-\frac{\l(3n-3)^\a}{(3n-1)^\a(3n-2)^\a}\\
\frac{(3n-1)^\a(3n-3)^\a}{(3n)^\a(3n-2)^\a}-\frac{\l(\l-b_{3n})(3n-3)^\a}{(3n)^\a(3n-1)^\a(3n-2)^\a}
\end{array}
\rt.\\
&&
\lt.
\begin{array}{l}
-\frac{(3n-2)^\a}{(3n-1)^\a}+\frac{\l^2}{(3n-2)^\a(3n-1)^\a}\\
-\frac{\l(3n-1)^\a}{(3n)^\a(3n-2)^\a}+\frac{(\l-b_{3n})}{(3n)^\a}
\lb -\frac{(3n-2)^\a}{(3n-1)^\a}+\frac{\l^2}{(3n-1)^\a(3n-2)^\a}\rb
\end{array}
\rt],
\end{eqnarray*}
and
\begin{eqnarray*}
\det \pps_n&=&\det\pp_{3n}\, \det\pp_{3n-1}\, \det\pp_{3n-2}\\
&=&\frac{(3n-3)^\a}{(3n)^\a}=\lp 1-\frac1n \rp^\a,\\
\tr \pps_n&=&-\frac{\l(3n-3)^\a}{(3n-1)^\a(3n-2)^\a}-\frac{\l(3n-1)^\a}{(3n)^\a(3n-2)^\a}\\
&+&\frac{(\l-b_{3n})}{(3n)^\a}
\lb -\frac{(3n-2)^\a}{(3n-1)^\a}+\frac{\l^2}{(3n-1)^\a(3n-2)^\a}\rb\\
&=&\d-\frac{3\l}{(3n)^\a}-\frac{\a\d}{3n}+O\lp\frac 1{n^{2\a}}\rp.
\end{eqnarray*}
where we used once again Taylor expansions for standard power functions. Furthermore, expanding similarly the components of $\pps_n$, we obtain
\begin{eqnarray}\label{e1005}
\pps_n&=&
\begin{bmatrix}
0& -1\\
1& 0
\end{bmatrix}
+\frac\a{3n}F_1+\frac\l{(3n)^\a}F_2 \\
&+& \frac{b_{3n}}{(3n)^\a}F_3\lb(1-\frac\a{3n})I_2+\frac\l{(3n)^\a}
\begin{bmatrix}
0& 1\\
1& 0
\end{bmatrix}
\rb +O\lp\frac 1{n^{2\a}}\rp, \nn
\end{eqnarray}
where
$$
F_1=
\begin{bmatrix}
0& 1\\
-2& 0
\end{bmatrix},\quad
F_2=
\begin{bmatrix}
-1& 0\\
0& -2
\end{bmatrix},\quad
F_3=
\begin{bmatrix}
0& 0\\
0& 1
\end{bmatrix}
$$
are constant $2\times 2$ matrices.

The leading term 
$\begin{bmatrix}
0& -1\\
1& \d
\end{bmatrix}$
in \eqref{e1005} has distinct eigenvalues $\d/2\pm\sqrt{\d^2/4-1}, \ \d\not=2,$ and so we can apply the generalized version of Levinson theorem from Janas-Moszy\'nski \cite[Thm. 1.5]{jamo1}, see also the discussion in Sect. \ref{s13}. Indeed, we have $\pps_n=A_n:=V_n+R_n$ where
\begin{eqnarray}\label{e16}
V_n&:=&
\begin{bmatrix}
0& -1\\
1& \d
\end{bmatrix}
+\frac\a{3n} F_1+\frac\l{(3n)^\a} F_2-\frac{b_{3n}\a}{(3n)^{\a+1}} F_3,\\
R_n&:=&
\frac{b_{3n}\l}{(3n)^{2\a}} F_3\cdot 
\begin{bmatrix}
0& 1\\
1& 0
\end{bmatrix}
+O\lp\frac 1{n^{2\a}}\rp, \nn
\end{eqnarray}
and
\begin{enumerate}
\item $\det \pps_n\to 1$, and $\det V_n\to 1$  as $n\to+\infty$,
\item $\{V_n\}\in BV$, 
\item we have
$$
V_\infty:=\lim_{n\to+\infty} V_n=
\begin{bmatrix}
0& -1\\
1& \d
\end{bmatrix}
$$
and $(\tr V_\infty)^2-4\det V_\infty=\d^2-4<0$ provided $0<\d<2$,
\item we see that $R_n=O(1/n^{2\a})\in \ell^1$ as $\a>1/2$.
\end{enumerate}
Hence, by Theorem \ref{t02} the generalized eigenvectors $u^\pm:=u^\pm(\l)=\{u^\pm_n(\l)\}, \ (Ju^\pm)’=\l (u^\pm)’$, are expressed through the eigenvalues $\{\l_\pm(n)\}$ of  the ``3-step'' matrices $\pps_n$. Recalling \eqref{e1005}, we compute $\{\l_\pm(n)\}$ as
\begin{eqnarray}
\l_\pm(n)&=&\l^\pm(\pps_n)=\frac 12\, \tr \pps_n\pm\sqrt{\frac 14(\tr\pps_n)^2-\det\pps_n}\\
&=&\frac{-3\l}{2(3n)^\a}+\frac{b_{3n}}{2(3n)^\a} -\frac{\a\d}{6n}\nn\\
&\pm&\sqrt{\lp \frac\d2-\frac{3\l}{2(3n)^\a}-\frac{\a\d}{6n}\rp^2-\lp 1-\frac 1n\rp^\a}+r_n \nn\\
&=&\frac{\d}{2}-\frac{3\l}{2(3n)^\a} -\frac{\a\d}{6n}\pm i\sqrt{\lp 1-\frac{\d^2}4\rp+\frac{3\d\l}{2(3n)^\a}-\frac{\a(3-\d)}{3n}}+r_n, \nn
\end{eqnarray}
where $\{r_n\}\in \ell^1$. For large $n$, we get $\l_+(n)=\ovl{\l_-(n)}$ and $|\l_\pm(n)|=(1-1/n)^\a, \ 1/2<\a<1$. It follows that the basis solutions $u^\pm(\l), \l\in\br$, do not generate a subordinated solution, and, by Gilbert-Pearson subordinacy  theory, see Sect. \ref{s12}, the spectrum $\s(J)$ of the constructed Jacobi matrix is purely absolutely continuous and $\br_+\subset \s_{ac}(J)=\br$. The proof of the proposition is complete.
\end{proof}

Observe that $0<\d<2$ in the above proof and the upper bound on $\d<2$ is optimal. If we take $\d>2$, then a reasoning similar to Proposition \ref{p1} shows that the spectrum $\s(J)$ is purely discrete on the whole real axis, see also Janas-Naboko \cite{jana3} in this connection.

\subsection{Example 2 to Theorem \ref{t22}, a scalar case}\label{s34}
We give short examples of application of Theorems \ref{t21}, \ref{t22} in this and the next subsection. This subsection is concerned with the scalar-valued Jacobi matrices, and the next one deals with the matrix-valued case.

\begin{proposition}\label{p5}
Let $J=J(\{a_n\},\{b_n\})$ be a scalar-valued Jacobi matrix, $d=1$. Assume that for large $n$
\begin{eqnarray*}
&& a_{2n}\le C_1 n^{\a_1},\quad a_{2n-1}\le C_2n^{\a_2},\\
&& b_{2n}\sim D_1 n^{\b_1}, \quad b_{2n-1}\sim D_2n^{\b_2},
\end{eqnarray*}
where $C_i, D_i>0$, $\a_i,\b_i>0$ for $i=1,2$. Let one of the following conditions holds true:
\begin{enumerate}
\item $2\max\{\a_1,\a_2\}<\b_1+\b_2$,
\item $2\a_1=\b_1+\b_2,\ 2\a_2<\b_1+\b_2$ and $C_1^2<D_1D_2$,
\item $2\a_1=2\a_2=\b_1+\b_2$ and $(C_1+C_2)^2<D_1D_2$.
\end{enumerate}
Then the spectrum $\s(J)$ is purely discrete and it accumulates to $+\infty$ only.
\end{proposition}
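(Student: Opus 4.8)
The plan is to derive the proposition from Theorem~\ref{t22} and from its shifted versions (mentioned right after the proof of Theorem~\ref{t22}), so that the work reduces to checking the three hypotheses and to one additional semiboundedness estimate. Since $d=1$, the blocks $B_n$ are the scalars $b_n$; as $b_{2n}\sim D_1 n^{\b_1}\to+\infty$ and $b_{2n-1}\sim D_2 n^{\b_2}\to+\infty$, for $n$ large we have $b_n>0$, so $(b_{2n})_+=b_{2n}$ and $\opp{b_{2n}^{-1}}=b_{2n}^{-1}$. Hence hypothesis~(1) of Theorem~\ref{t22} holds ($b_{2n-1}\to+\infty$) and so does hypothesis~(2) ($(b_{2n})_+=b_{2n}\to+\infty$).

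In the scalar case the $\limsup$ condition~(3) reads
$$
\limsup_{n\to\infty}\frac{|a_{2n-1}|}{\sqrt{b_{2n}\,b_{2n-1}}}
+\limsup_{n\to\infty}\frac{|a_{2n-2}|}{\sqrt{b_{2n-2}\,b_{2n-1}}}<1 .
$$
Inserting the asymptotics, the first $\limsup$ is at most $\tfrac{C_2}{\sqrt{D_1D_2}}\,\limsup_n n^{\,\a_2-(\b_1+\b_2)/2}$ and the second at most $\tfrac{C_1}{\sqrt{D_1D_2}}\,\limsup_n n^{\,\a_1-(\b_1+\b_2)/2}$. In case~(1) both exponents are negative, so the left-hand side is $0$; in case~(2) the first $\limsup$ vanishes and the second equals $C_1/\sqrt{D_1D_2}<1$; in case~(3) the left-hand side equals $(C_1+C_2)/\sqrt{D_1D_2}<1$. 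Thus Theorem~\ref{t22} applies and $\s(J)\cap(0,+\infty)$ is discrete. Because \emph{both} sequences $b_{2n}$ and $b_{2n-1}$ tend to $+\infty$, replacing $b_n$ by $b_n-c$ preserves all three hypotheses for every $c\in\br$; the shifted form of Theorem~\ref{t22} then yields that $\s(J)\cap(c,+\infty)$ is discrete for every $c$, i.e.\ $\s(J)$ is purely discrete.

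To finish I would show that $\s(J)$ does not accumulate at $-\infty$, i.e.\ that $J$ is bounded below. Estimating $2a_n\re(\bar u_nu_{n+1})\ge-|a_n|\bigl(\l_n|u_n|^2+\l_n^{-1}|u_{n+1}|^2\bigr)$ with $\l_n=\sqrt{b_n/b_{n+1}}$ (legitimate for $n$ large), the coefficient of $|u_n|^2$ in the resulting lower bound for $\la Ju,u\ra$ is $b_n(1-\th_n)$, where $\th_n=\tfrac{|a_n|}{\sqrt{b_nb_{n+1}}}+\tfrac{|a_{n-1}|}{\sqrt{b_{n-1}b_n}}$. The very same three-case computation gives $\limsup_n\th_n<1$, hence this coefficient is nonnegative for all large $n$, while the finitely many remaining indices only cost a constant. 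So $J\ge-c_0 I$ for some $c_0$, $\s(J)\subset[-c_0,+\infty)$, and, combined with the discreteness above and the fact that $\la Je_{2n-1},e_{2n-1}\ra=b_{2n-1}\to+\infty$ (so $\s(J)$ is unbounded above, as in Proposition~\ref{p1}), this gives that $\s(J)$ is purely discrete and accumulates to $+\infty$ only.

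The main obstacle is precisely this last paragraph: Theorem~\ref{t22} controls a priori only the positive half-line, so the semiboundedness of $J$ must be obtained separately. It needs no new input, though — in the scalar case with eventually positive diagonal the products $b_nb_{n+1}$ are exactly the analogue of the $\opp{\,\cdot\,}$-products in Theorem~\ref{t22}(3), so the same estimate does double duty. Among the three cases, (3) is the critical one, where the coupling between $\{a_n\}$ and $\{b_n\}$ is borderline and the inequalities are tightest; there it is exactly the assumption $(C_1+C_2)^2<D_1D_2$ that makes the relevant $\limsup$ strictly less than $1$.
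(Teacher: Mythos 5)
Your proposal is correct and takes essentially the same route as the paper: the paper's proof likewise consists of checking that hypotheses (1)--(2) of Theorem \ref{t22} are obvious and reducing condition \eqref{e8} to the same three-case computation yielding $0$, $C_1/\sqrt{D_1D_2}$ or $(C_1+C_2)/\sqrt{D_1D_2}<1$. In fact the paper stops there, leaving the passage from ``$\s(J)\cap(0,+\infty)$ discrete'' to ``purely discrete, accumulating at $+\infty$ only'' implicit (via the shifted versions of Theorem \ref{t22} and semiboundedness below), so your last paragraph --- the shift argument plus the lower bound with $\l_n=\sqrt{b_n/b_{n+1}}$ giving $J\ge -c_0 I$ --- is a correct filling-in of exactly what the paper glosses over.
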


It goes without saying that the above condition (2) can be replaced by its symmetric version, 
$2\a_1<\b_1+\b_2,\ 2\a_2=\b_1+\b_2$ and $C_1^2<D_1D_2$.

\begin{proof}
We obviously have $b_{2n-1}\to+\infty,\ b_{2n}=(b_{2n})_+\to +\infty$, so it remains to check condition \eqref{e8} only. In this specific situation it reads as
$$
\limsup_{n\to+\infty}\; b_{2n}^{-1/2}a_{2n-1}b_{2n-1}^{-1/2}+\limsup_{n\to+\infty}\; b_{2n-2}^{-1/2}a_{2n-2}b_{2n-1}^{-1/2})<1,
$$
or
$$
\limsup_{n\to+\infty}\; \frac{C_1(2n)^{\a_1}}{\sqrt{D_1D_2} (2n)^{(\b_1+\b_2)/2}}+
\limsup_{n\to+\infty}\; \frac{C_2(2n)^{\a_2}}{\sqrt{D_1D_2} (2n)^{(\b_1+\b_2)/2}}<1.
$$
Now, it is very easy to finish the proof for cases (1)-(3) from the formulation of the proposition. For the sake of completeness, we give the details for the case (3), the other cases are similar. So,  when $2\a_1=2\a_2=\b_1+\b_2$, the latter limsup’s are
$$
\limsup_{n\to+\infty}\; \frac{C_1 n^{\a_1}}{\sqrt{D_1D_2} n^{(\b_1+\b_2)/2}}
+
\limsup_{n\to+\infty}\; \frac{C_2 n^{\a_2}}{\sqrt{D_1D_2} n^{(\b_1+\b_2)/2}}
=\frac{C_1+C_2}{\sqrt{D_1D_2}},
$$
which is strictly less than one. The proof is finished.
\end{proof}
Note that in case (3) above, when one has $(C_1+C_2)^2=D_1D_2$, the continuous spectrum of $J$ can cover a half-line, see \cite{dana}.

\subsection{Example 3 to Theorem \ref{t22}, a $2\times 2$-matrix case}\label{s35}
In this subsection, we show how Theorem \ref{t22} can be applied to the study of the spectrum of a block Jacobi matrix $J=J(\{A_n\},\{B_n\})$. The computations essentially use its block structure. For simplicity, we take $d=2$.

Let $B_{2n-1}\to+\infty$ in the sense required by Theorem \ref{t22} as $n\to+\infty$, and, for large $j$ 
\begin{equation}\label{e1600}
B_{2n}:=
\lt\{
\begin{array}{lcl}
0_2&,& n=2j,\\
\begin{bmatrix}
0&0\\
0&b_j
\end{bmatrix}&,& n=2j-1.
\end{array}\rt.
\end{equation}
Notice that for large $j$
\begin{equation}\label{e160}
\opp{B^{-1}_{4j}}=0_2, \quad \opp{B^{-1}_{4j-2}}=
\begin{bmatrix}
0&0\\
0& b_j^{-1}
\end{bmatrix},
\end{equation}
and we assume $b_j\to+\infty$ as $j\to+\infty$. 
Furthermore, set
\begin{equation}\label{e17}
A_{2j}=a_{2j}
\begin{bmatrix}
1&0\\
\epp_{2j}&\eta_{2j}
\end{bmatrix},\quad
A_{2j-1}=a_{2j-1}
\begin{bmatrix}
1&0\\
\epp_{2j-1}&\eta_{2j-1}
\end{bmatrix},
\end{equation}
where $a_{2j}, a_{2j-1}>0$ and $\eta_{2j}, \eta_{2j-1}\not =0$, so the inverses to $A_{2j}, A_{2j-1}$ exist.

\begin{proposition}\label{p6}
Let $J=J(\{A_n\},\{B_n\})$ be a $2\times 2$-block Jacobi matrix with block entries $A_n, B_n$ defined in \eqref{e1600}, \eqref{e17}. Let $a_j, \eta_j$ satisfy the assumptions given above. Furthermore, suppose
\begin{eqnarray}
&& \limsup_{j\to+\infty}\; b_j^{-1/2}a_{4j-2}\bi\bi B_{4j-1}^{-1/2}
\begin{bmatrix}
\bar \epp_{4j-2}\\ \bar\eta_{4j-2}
\end{bmatrix}
\bi\bi<1, \label{e181}\\
&& \limsup_{j\to+\infty}\; b_j^{-1/2}a_{4j-3}|\eta_{4j-3}| \bi\bi B_{4j-3}^{-1/2}
\begin{bmatrix}
0\\ 1
\end{bmatrix}
\bi\bi<1. \label{e182}
\end{eqnarray}
Then the $\s(J)\cap (0,+\infty)$ is discrete and it accumulates to $+\infty$ only.
\end{proposition}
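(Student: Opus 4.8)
The plan is to recognize $J$ as an instance to which Theorem~\ref{t22} applies, so the whole task reduces to checking its three hypotheses and then quoting it. Hypotheses (1) and (2) should cost essentially nothing: $B_{2n-1}\to+\infty$ is assumed directly, and from \eqref{e160} we have $\opp{B^{-1}_{4j}}=0_2$ together with $\opp{B^{-1}_{4j-2}}=\mathrm{diag}(0,b_j^{-1})\to 0_2$ (since $b_j\to+\infty$), so that $\lim_{k\to\infty}\opp{B^{-1}_{2k}}=0_2$, which is hypothesis (2) in the form recorded after Theorem~\ref{t22}.

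The real work is hypothesis (3), that is, the inequality \eqref{e8}, and the idea is to use the special structure of the even-numbered diagonal blocks. They come in two families: those with index $4j$, which are $0_2$ — so $\opp{B^{-1/2}_{4j}}=0_2$ and every summand of \eqref{e8} involving such a block drops out — and those with index $4j-2$, which equal $\mathrm{diag}(0,b_j)$, so that $\opp{B^{-1/2}_{4j-2}}=b_j^{-1/2}e_2e_2^{*}$ is a rank-one operator, with $e_2=\bigl[\begin{smallmatrix}0\\1\end{smallmatrix}\bigr]$. I would then note that multiplying this rank-one operator, on one side by the relevant $A$-block from \eqref{e17} and on the other by the relevant odd $B^{-1/2}$-block, yields a rank-one matrix whose operator norm is the product of two vector norms; the only algebra involved is $\|e_2v^{*}\|=\|v\|$ and $\|v^{*}M\|=\|Mv\|$ for Hermitian $M$.

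Pushing this through: in the first $\limsup$ of \eqref{e8} only the indices with $2n=4j-2$ survive, and computing with $A^{*}_{4j-3}$ and $B^{-1/2}_{4j-3}$ one gets
$$\bigl\|\opp{B^{-1/2}_{4j-2}}A^{*}_{4j-3}B^{-1/2}_{4j-3}\bigr\|=b_j^{-1/2}a_{4j-3}\,|\eta_{4j-3}|\,\bigl\|B^{-1/2}_{4j-3}\bigl[\begin{smallmatrix}0\\1\end{smallmatrix}\bigr]\bigr\|,$$
so that $\limsup$ is precisely the left side of \eqref{e182}. Likewise, in the second $\limsup$ of \eqref{e8} only the indices with $2n-2=4j-2$ survive, and computing with $A_{4j-2}$ and $B^{-1/2}_{4j-1}$ gives
$$\bigl\|\opp{B^{-1/2}_{4j-2}}A_{4j-2}B^{-1/2}_{4j-1}\bigr\|=b_j^{-1/2}a_{4j-2}\,\bigl\|B^{-1/2}_{4j-1}\bigl[\begin{smallmatrix}\bar\epp_{4j-2}\\\bar\eta_{4j-2}\end{smallmatrix}\bigr]\bigr\|,$$
the left side of \eqref{e181}. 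Hence \eqref{e8} is exactly the requirement that the sum of the two quantities in \eqref{e181}--\eqref{e182} be less than one, so Theorem~\ref{t22} applies and $\s(J)\cap(0,+\infty)$ is discrete. For the accumulation statement I would argue as in Proposition~\ref{p1}: testing $J$ against a unit vector supported on a single odd component $2n-1$ gives $\la Ju,u\ra=\la B_{2n-1}u_{2n-1},u_{2n-1}\ra\to+\infty$, so the eigenvalues cannot pile up at any finite point of $(0,+\infty)$, i.e.\ they accumulate only at $+\infty$.

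I expect no analytic obstacle here — the only thing that genuinely needs care is the bookkeeping modulo $4$ (which even blocks are zero, and which entry of each $A$-block the $e_2$-direction selects), after which the operator norms collapse to the vector norms displayed in \eqref{e181} and \eqref{e182} and the conclusion follows by a direct invocation of Theorem~\ref{t22}.
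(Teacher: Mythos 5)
Your proposal is correct and follows essentially the same route as the paper: hypotheses (1)--(2) of Theorem \ref{t22} are immediate from \eqref{e1600}--\eqref{e160}, and condition \eqref{e8} collapses, via the rank-one structure of $\opp{B^{-1/2}_{4j-2}}=b_j^{-1/2}e_2e_2^{*}$ (with the $4j$-indexed terms vanishing) and the Hermiticity of the odd $B^{-1/2}$-blocks, exactly to the quantities in \eqref{e181} and \eqref{e182}. Your index bookkeeping agrees with the paper's displayed computation (where the factor written $a_{4j-1}$ is a typo for $a_{4j-2}$), and the closing remark on accumulation only at $+\infty$ is a harmless addition in the spirit of Proposition \ref{p1}.
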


\begin{proof}
Once again, this is a straightforward application of Theorem \ref{t22}. Since assumptions (1)-(3) of this theorem are clearly satisfied, we need to check the last condition \eqref{e8} only; hence we are to show that this condition is equivalent to \eqref{e181}, \eqref{e182}.

Since the computations are rather similar and simple, we give details for relation  \eqref{e181} only. Reminding \eqref{e160}, we obtain
\begin{eqnarray*}
&&\opp{B^{-1/2}_{4j}}A_{4j}B^{-1/2}_{4j+1}=0_2,\\
&&\opp{B^{-1/2}_{4j-2}}A_{4j-2}B^{-1/2}_{4j-1}=
b_j^{-1/2}a_{4j-1}
\begin{bmatrix}
0&0\\
\epp_{4j-2}&\eta_{4j-2}
\end{bmatrix} 
B^{-1/2}_{4j-1}.
\end{eqnarray*}
Recalling that $B^{-1/2}_{4j-1}$ is an Hermitian matrix, relation \eqref{e181} follows.
\end{proof}

\medskip\nt
{\bf Acknowledgements.} \  
Both authors are close friends of Leonid Golinskii, and SK is Leonid’s collaborator of a long date. This is our great pleasure to dedicate this article to Leonid’s 65-th anniversary. We wish to Leonid many more years of blossoming scientific activity, as well as many nice mathematical results in the so distinctive “LG soft power” style. 

We would like to thank Prof.~C.~Tretter for helpful remarks on the subject of the paper. SN is partially supported by grants RSF15-11-30007, NCN2013/BST/ 04319 and  by the project SPbGU  N11.42.1071.2016. The research presented in the paper was done during his stay at University of Bordeaux in the framework of ``IdEx International Scholars'' program. SN would like to acknowledge the financial support of the program, as well as the hospitality of the Institute of Mathematics of Bordeaux he was hosted at.


\begin{thebibliography}{99}

\bibitem{akh} Akhiezer, N. I. The classical moment problem and some related questions in analysis.  Hafner Publishing Co., New York, 1965.

\bibitem{belu}
Benzaid, Z.; Lutz, D. A. Asymptotic representation of solutions of perturbed systems of linear difference equations. Stud. Appl. Math. 77 (1987), no. 3, 195--221.

\bibitem{ber1}
Berezanskii, Yu. M. Expansions in eigenfunctions of selfadjoint operators.  Translations AMS, vol. 17. AMS, Providence, RI 1968.

\bibitem{birso1} Birman, M.; Solomjak, M. Spectral theory of selfadjoint operators in Hilbert space.  Mathematics and its Applications (Soviet Series). D. Reidel Publishing, Dordrecht, 1987.

\bibitem{cole}
Coddington, E. A.; Levinson, N. Theory of ordinary differential equations. McGraw-Hill Book Company, Inc., New York-Toronto-London, 1955.

\bibitem{dana}
Damanik, D.; Naboko, S.
Unbounded Jacobi matrices at critical coupling.  J. Approx. Theory 145 (2007), no. 2, 221--236.

\bibitem{dapu}
Damanik, D.; Pushnitskii, A.; Simon, B.
The analytic theory of matrix orthogonal polynomials.  Surv. Approx. Theory 4 (2008), 1-85.

\bibitem{elay} 
Elaydi, S. An introduction to difference equations. Third edition. Undergraduate Texts in Mathematics. Springer, New York, 2005.

\bibitem{jamo1}
Janas, J.; Moszyński, M. Spectral properties of Jacobi matrices by asymptotic analysis. J. Approx. Theory 120 (2003), no. 2, 309--336. 

\bibitem{jamo2}
Janas, J.; Moszyński, M. Spectral analysis of unbounded Jacobi operators with oscillating entries. Studia Math. 209 (2012), no. 2, 107--133.

\bibitem{jana1}
Janas, J.; Naboko, S. Multithreshold spectral phase transitions for a class of Jacobi matrices. Recent advances in operator theory (Gröningen, 1998), 267–285, Oper. Theory Adv. Appl., 124, Birkhäuser, Basel, 2001.

\bibitem{jana2}
Janas, J.; Naboko, S. Spectral properties of selfadjoint Jacobi matrices coming from birth and death processes. Recent advances in operator theory and related topics (Szeged, 1999), 387–397, Oper. Theory Adv. Appl., 127, Birkhäuser, Basel, 2001.

\bibitem{jana3}
Janas, J.; Naboko, S. Spectral analysis of selfadjoint Jacobi matrices with periodically modulated entries. J. Funct. Anal. 191 (2002), no. 2, 318--342.

\bibitem{jana4}
Janas, J.; Naboko, S. On the point spectrum of periodic Jacobi matrices with matrix entries: elementary approach. J. Difference Equ. Appl. 21 (2015), no. 11, 1103--1118.

\bibitem{jana5}
Janas, J.; Naboko, S.; Sheronova, E. Asymptotic behavior of generalized eigenvectors of Jacobi matrices in the critical (double root) case. Z. Anal. Anwend. 28 (2009), no. 4, 411--430.

\bibitem{jana6}
Janas, J.; Naboko, S.; Stolz, G. Spectral theory for a class of periodically perturbed unbounded Jacobi matrices: elementary methods. J. Comput. Appl. Math. 171 (2004), no. 1-2, 265--276.

\bibitem{gipe}
Gilbert, D. J.; Pearson, D. B. On subordinacy and analysis of the spectrum of one-dimensional Schrödinger operators.
J. Math. Anal. Appl. 128 (1987), no. 1, 30--56.

\bibitem{ka}
 Kato, T. Perturbation theory for linear operators. Reprint of the 1980 edition. Springer-Verlag, Berlin, 1995.

\bibitem{khape}
Khan, S.; Pearson, D. B.
Subordinacy and spectral theory for infinite matrices.  Helv. Phys. Acta 65 (1992), no. 4, 505--527.

\bibitem{napche}
Naboko, S.; Pchelintseva, I.; Silva, L.
Discrete spectrum in a critical coupling case of Jacobi matrices with spectral phase transitions by uniform asymptotic analysis. J. Approx. Theory 161 (2009), no. 1, 314--336.

\bibitem{nasi}
Naboko, S.; Simonov, S. Spectral analysis of a class of Hermitian Jacobi matrices in a critical (double root) hyperbolic case. Proc. Edinb. Math. Soc. (2) 53 (2010), no. 1, 239--254.

\bibitem{resi1}
Reed, M.; Simon, B. Methods of modern mathematical physics. I. Functional analysis. Second edition. Academic Press, Inc., New York, 1980.

\bibitem{resi2}
Reed, M.; Simon, B. Methods of modern mathematical physics. III. Scattering theory; IV. Analysis of operators. Second edition. Academic Press, Inc., New York, 1979, 1978.

\bibitem{swi}
Świderski, G. Spectral properties of unbounded Jacobi matrices with almost monotonic weights. Constr. Approx. 44 (2016), no. 1, 141--157.

\bibitem{te1}
Teschl, G. Jacobi operators and completely integrable nonlinear lattices. AMS Surveys and Monographs, vol. 72. AMS, Providence, RI, 2000.

\bibitem{tre}
Tretter, C. Spectral theory of block operator matrices and applications. Imperial College Press, London, 2008.

\bibitem{we1}
Weidmann, J. Linear operators in Hilbert spaces.  Graduate Texts in Mathematics, vol. 68. Springer-Verlag, New York-Berlin, 1980.

\bibitem{we}
Weidmann, J. Lineare Operatoren in Hilberträumen. Teil 1. (German) [Linear operators in Hilbert spaces. Part 1] Grundlagen. [Foundations] Mathematische Leitfäden. [Mathematical Textbooks] Teubner, Stuttgart, 2000. 

\end{thebibliography}
\end{document}